\DeclareMathOperator{\lie}{\mathcal{L}}
\DeclareMathOperator{\one}{\mathbf{1}}
\DeclareMathOperator{\zero}{\mathbf{0}}
\DeclareMathOperator{\Lap}{\mathsf{L}}
\DeclareMathOperator{\A}{\mathsf{A}}
\DeclareMathOperator{\D}{\mathsf{D}}
\DeclareMathOperator{\M}{\mathsf{M}}
\DeclareMathOperator{\N}{\mathsf{N}}
\DeclareMathOperator{\PP}{\mathsf{P}}
\DeclareMathOperator{\V}{\mathcal{V}}
\DeclareMathOperator{\normal}{\mathcal{N}}
\DeclareMathOperator{\E}{\mathcal{E}}
\DeclareMathOperator{\F}{\mathcal{F}}
\DeclareMathOperator{\R}{\mathcal{R}}
\DeclareMathOperator{\im}{\operatorname{i}}
\DeclareMathOperator{\erf}{\operatorname{erf}}
\DeclareMathOperator{\rank}{\operatorname{rank}}
\DeclareMathOperator{\G}{\mathcal{G}}
\DeclareMathOperator{\refs}{\operatorname{ref}}
\newcommand{\real}{\mathbb{R}}
\newcommand{\integer}{\mathbb{Z}}
\newcommand{\gdac}{\psi_{\text{gdac}}}
\newcommand{\complex}{\mathbb{C}}
\newtheorem{theorem}{Theorem}[section]
\newtheorem{proposition}[theorem]{Proposition}
\newtheorem{lemma}[theorem]{Lemma}
\theoremstyle{remark}
\newtheorem{remark}{Remark}
\theoremstyle{definition}
\newcommand{\longthmtitle}[1]{\mbox{} \textit{(#1):}}
\newcommand{\setdef}[2]{\{#1 \; | \; #2\}}
\newcommand{\map}[3]{#1:#2\rightarrow #3}
\begin{document}
\title{Enabling DER Participation in Frequency Regulation Markets}
\author{Priyank Srivastava \quad Chin-Yao Chang \quad Jorge
  Cort\'{e}s% <-this % stops a space
  \thanks{A preliminary version of this work appeared at the American
    Control Conference as~\cite{PS-CYC-JC:18-acc}.}  \thanks{This work
    was supported by the ARPA-e Network Optimized Distributed Energy
    Systems (NODES) program, DE-AR0000695.}
  \thanks{P. Srivastava and J. Cort\'{e}s are with the Department of Mechanical and
    Aerospace Engineering, University of California, San Diego,
    {\tt\small \{psrivast,cortes\}@ucsd.edu}}%}
    \thanks
    {C.-Y. Chang is with the National Renewable Energy Laboratory,
    {\tt\small \{chinyao.chang\}@nrel.gov}}
}
% make the title area
\maketitle

\begin{abstract}
  Distributed energy resources (DERs) are playing an increasing role
  in ancillary services for the bulk grid, particularly in frequency
  regulation.  In this paper, we propose a framework for collections
  of DERs, combined to form microgrids and controlled by aggregators,
  to participate in frequency regulation markets.  Our approach covers
  both the identification of bids for the market clearing stage and
  the mechanisms for the real-time allocation of the regulation
  signal.  The proposed framework is hierarchical, consisting of a top
  layer and a bottom layer. The top layer consists of the aggregators
  communicating in a distributed fashion to optimally disaggregate the
  regulation signal requested by the system operator. The bottom layer
  consists of the DERs inside each microgrid whose power levels are
  adjusted so that the tie line power matches the output of the
  corresponding aggregator in the top layer. The coordination at the
  top layer requires the knowledge of  cost functions, ramp rates
  and capacity bounds of the aggregators. We develop meaningful
  abstractions for these quantities respecting the power flow
  constraints and taking into account the load uncertainties, and
  propose a provably correct distributed algorithm for optimal
  disaggregation of regulation signal amongst the microgrids.
 \end{abstract}

\section{Introduction}
Electric power systems require the generation and load to be equal at
all times. Any discrepancy between the two leads to the deviation of
the frequency from its nominal value.
% of 60Hz (or 50Hz in some places).
This deviation of the frequency leads to many undesirable
scenarios. Based on measurements of the frequency deviation, the
system operator computes the automatic generation control (AGC) signal
as the feedback frequency control to the power system, which appears
as the total active power adjustment. Traditionally, frequency
regulation services have been provided by individual energy resources,
such as coal generation plants or gas turbines. Recently, there has
been a trend towards the integration of more DERs into the system to
provide these services while reducing thermal and CO$_2$ emissions.
Such integration leads to higher uncertainty in the bulk grid. At the
same time, as most DERs are inertialess, they can be effective for
frequency regulation due to their high ramp rates.  DERs are limited
in size and might not meet the minimum size criteria specified by
system operators to participate in the frequency regulation market. To
address these challenges, the vision is to integrate groups of DERs
through distributed energy resource providers (DERPs), or aggregators,
which would act as virtual power plants (VPPs) and would be
communicating with the system operator. These aggregators do not
necessarily own the DERs, they just coordinate their responses. This
architecture, illustrated in Figure~\ref{fig:fw}, has been proposed by
the California ISO (CAISO) to offer aggregators of DERs the
opportunity to sell into its marketplace~\cite{CAISO:15-derp}.  The
recent Order No. 2222~\cite{FERC-2222:20} by the U.S. Federal Energy
Regulatory Commission (FERC) also enables aggregators to participate
in the energy markets and requires all Regional Transmission
Organizations (RTOs) to revise their tariffs to establish DERs as a
category of market participant.  Using aggregators not only solves the
problem of limited capabilities of DERs but also enables the system
operator to interact with much fewer entities.  This paper is
motivated by the need to address the challenges to carry out the
vision described above.

\begin{figure}
\centering
    \includegraphics[width=0.47\textwidth]{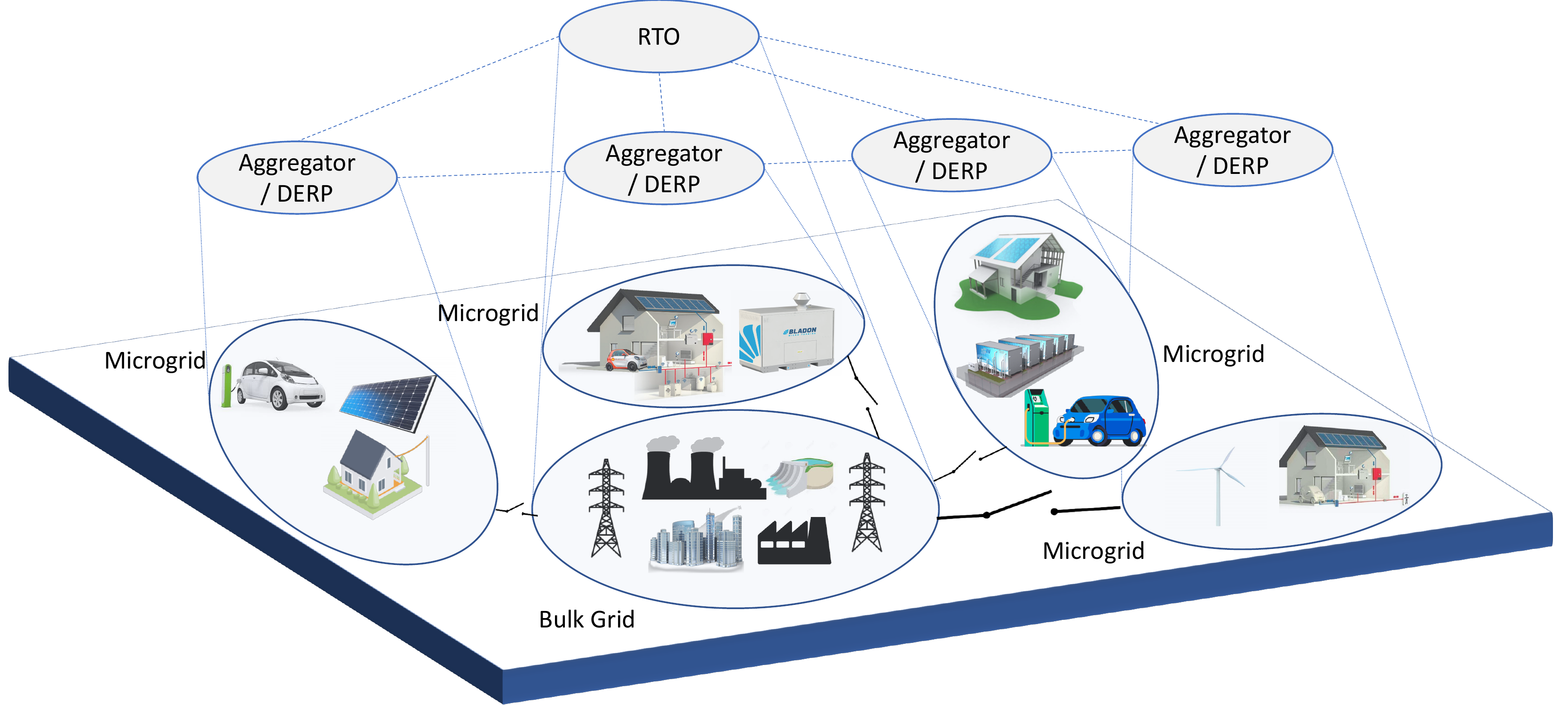}
    \caption{Power system framework. Dashed lines represent
      communication links and solid lines represent electrical
      connections. All the microgrids are connected to the bulk grid
      through the tie lines. The Regional Transmission Organization
      (RTO) monitors the bulk grid and coordinates with the
      aggregators, which control the resources inside the
      microgrids.}\label{fig:fw}
    \vspace*{-1.5ex}
\end{figure}

\emph{Literature Review:} Order No. 755~\cite{FERC:11} issued by the
FERC requires RTOs to compensate energy resources based on the actual
frequency regulation provided. The payment to resources comprises of
two parts, the capacity and performance payments. The capacity payment
compensates resources for their provision of regulation capacity.  The
performance payment reflects the accuracy of the tracking of the
allocated regulation signal. The work~\cite{MKM:14} describes how
different RTOs across the United States have implemented FERC Order
755 for participation of resources in frequency regulation market. In
the literature on power networks and smart grid, some works have
considered the possibility of obtaining frequency regulation services
from collections of homogeneous loads such as electric vehicles (EVs)
and thermostatically controlled loads (TCLs),
cf. \cite{JLM-SK-DSC:13,PC-MP-YP:14,BMS-HH-KP-TLV:14}. The
work~\cite{JTH-ADDG-KP:16} presents a method to model flexible loads
as a virtual battery for providing frequency
regulation.~\cite{SR-JS-HR:13} proposes the use of aggregators to
integrate heterogeneous loads such as heat pumps, supermarket
refrigerators and batteries present in industrial buildings to provide
frequency regulation. The works~\cite{OB-MP-YP:16,OB-KK-YP-MP-AP:18}
describe the challenges that need to be overcome for providing
frequency regulation by DERs for some European countries.  The
work~\cite{EDA-SG-AS-YCC-SVD:18} provides a framework to emulate
virtual power plants (VPPs) via aggregations of DERs and provide
regulation services taking into account the power flow
constraints.~\cite{BB-PA-TSP-KMN-JS-LHH:13} provides a dispatch
strategy for an aggregate of ON/OFF devices to provide frequency
regulation. In~\cite{JTH-ADDG-KP:17,CYC-SM-JC:17-cdc,HX-SCU-ADDG-PWS:17},
work has been done in the context of microgrids to design mechanisms
for optimally allocating a given signal among the DERs within the
microgrid.~\cite{RG-MA-PB:18} proposes a distributed algorithm to
minimize the aggregated cost while satisfying the local constraints
and collective demand constraint at the aggregator. However, the
aforementioned works assume that the allocated signal from the RTO is
available to the aggregator.~\cite{PM-AMK-HB-GD:16} applies machine
learning to forecast the power capacity of VPPs. The
work~\cite{YW-XA-ZT-LY-SL:16} provides a framework for optimal bidding
and dispatch of multiple VPPs.  \cite{SZ-YM-MS:17} proposes
  the use of renewable energy aggregators to utilize small-scale
  distributed generators for frequency regulation services via
  forecasting the available power from individual resources.  The
  work~\cite{SC-AM-GK:18} also uses forecasting to estimate the
  aggregate production from a wind and solar power-based VPP, and then
  uses the estimation to determine the optimal volume of reserves that
  can be provided to the system operator.  A distributed algorithm for
  coordinating multiple aggregators to provide frequency regulation,
  without any consideration of cost, is proposed
  in~\cite{JH-JC-JMG-TY-JY:17}.  Here, we focus on (i) participation
of microgrids in frequency regulation markets operated by the RTO
through the identification of appropriate bids and (ii) the
coordination among RTO and aggregators to efficiently dis-aggregate
the regulation signal amongst the aggregators. The actual tracking
performance within the microgrid would depend on the physical
condition of the resources.  We have provided some results for this
in~\cite{TA-MM-PS-HVH-JC-JK-SM-BW:20} on experiments carried out on
the University of California, San Diego (UCSD) microgrid.

\emph{Statement of Contributions:} We propose a hierarchical framework
for the participation of microgrids in the frequency regulation
market. We start by briefly reviewing the current practice of
frequency regulation from individual resources, consisting of three
stages: (i) market clearance, (ii) disaggregation of the regulation
signal and (iii) real-time tracking of the regulation signal.  Our
first contribution is the identification of the limitations of current
practice and the challenges that need to be overcome for integration
of microgrids. Our second contribution is the identification of
abstractions for the capacity, cost of generation, and ramp rates of a
microgrid as a combination of the individual energy resources that
compose it, along with a formal description of its convexity and
monotonicity properties. Building on our preliminary
work~\cite{PS-CYC-JC:18-acc}, here we extend our abstractions to the
case when the loads inside the microgrid do not remain constant for
the regulation period. Equipped with these abstractions, a microgrid
can submit bids to participate in the market clearance stage. Our
third contribution is the reformulation of the RTO-DERP coordination
problem to optimally disaggregate regulation signal amongst the
microgrids and accompanying design of an algorithmic solution. Our
proposed reformulation ensures feasibility. The proposed algorithm is
distributed over directed graphs with only one aggregator needing to
know the required regulation, and is guaranteed to asymptotic converge
to the desired optimizers.  We conclude with simulation results based
on the proposed abstractions of capacities, cost, and ramp rate and
the RTO-DERP coordination algorithm on a reduced-order model of the
University of California, San Diego (UCSD) microgrid.

\section{Preliminaries}\label{sec:prelims}
In this section, we present notational conventions and review some
basic concepts.

\emph{Notation:} Let $\complex$, $\real$, $\real_{\ge 0}$, and
$\integer$ be the set of complex, real, non-negative real and integer
numbers, respectively. For a set $|X|$, we let $|X|$ denote its
cardinality.  $\one$ and $\zero$ denote the vectors of all ones and
all zeros of appropriate dimension, respectively.  We use $|x|$ to
denote the absolute value of $x$, $[x]^+$ to denote $\max\{x,0\}$ and
$[x]^+_a$ to denote $[x]^+$ if $a>0$ and 0 if $a \leq 0$.  If $x$ is a
vector, these functions are applied elementwise.  For a matrix $A$,
its $i$th row and transpose are denoted by~$A_i$ and $A^\top$,
respectively.  We denote the gradient of a differentiable real-valued
function $f: \mathbb{R}^n \rightarrow \mathbb{R}$ by $\nabla f$.

\emph{Graph Theory:} We let $\G=(\V,\E,\A)$ denote a directed graph,
with $\V$ as the set of vertices (or nodes) and $\E\subseteq\V \times
\V$ as the set of edges.  $(i,j) \in \E$ iff there is an edge from
node $i$ to $j$.  We let $|\V|=n$ and $|\E|=m$.  A path is an ordered
sequence of vertices such that any pair of vertices that appear
consecutively is an edge.  A loop is a path in which the first and
last vertices are same and none of the other vertices is repeated.  A
graph is strongly connected if there is a path between any two
distinct vertices.  A tree is a graph whose underlying undirected
graph does not have any loops and is connected.  The \emph{adjacency
  matrix} $\A \in \real^{n \times n}$ of $\G$ is defined such that
$\A_{ij}>0$ if the edge $(i, j) \in \E$ and 0, otherwise.  The
out-degree and in-degree of a node $i$ are respectively, the number of
outgoing edges from and incoming edges to $i$.  The weighted
out-degree and the weighted in-degree of a node $i$ are given by
$d^{\text{out}}_i=\sum_{i=1}^{n} \A_{ij}$ and
$d^{\text{in}}_i=\sum_{i=1}^{n} \A_{ji}$, respectively.  The
\emph{weighted out-degree matrix} $\D^{\text{out}} \in \real^{n \times
  n}$ and the \emph{weighted in-degree matrix} $\D^{\text{in}} \in
\real^{n \times n}$ are the diagonal matrices with
$\D^{\text{out}}_{ii} = d^{\text{out}}_i$ and
$\D^{\text{in}}_{ii}=d^{\text{in}}_i$.  A graph is weight-balanced if
$\D^{\text{out}}=\D^{\text{in}}$.  The \emph{Laplacian matrix} $\Lap
\in \real^{n \times n}$ is defined as $\Lap = \D^{\text{out}} - \A$.
$0$ is a simple eigenvalue of $\Lap$ with eigenvector $\one$ iff $\G$
is strongly connected, and $\one^\top \Lap=\zero$ iff $\G$ is
weight-balanced.  The \emph{incidence matrix} $\M \in \real^{n \times
  m}$ is defined such that $\M_{ij} = 1$ if the edge $j$ leaves vertex
$i$, $-1$ if it enters vertex $i$, and $0$ otherwise.  Note that every
column of $\M$ has only two non-zero entries and $\one^\top \M=\zero$.
The \emph{fundamental loop matrix} $\N \in \real^{m \times (m-n+1)}$
of a graph has $\N_{ij}$ as 1 (-1, respectively) if the $i$th edge has
the same (opposite, respectively) orientation as the $j$th loop, and
$\N_{ij}=0$ if edge $i$ is not part of loop $j$. We use $\PP_{\refs}
\in \real^{(n-1) \times m}$ to denote the \emph{path matrix} of a tree
with reference vertex $\refs$: the $ij$th entry of the path matrix is
+1/-1 if edge $j$ is in the directed path from $i$ to $\refs$ and has
the same/opposite orientation as this path, and is 0 otherwise.
  
\emph{Probability Theory:} Given an event $E$, we let $E^c$ denote its
complement and $\Pr(E)$ its probability.  Given a normally distributed
random variable $\zeta \sim\normal(\mu, \sigma)$ with mean $\mu$ and
variance $\sigma$, the probability $\Pr(\zeta \leq x)$ of $\zeta$
being less than or equal to $x$ is denoted
\begin{align*}
  \Phi(x)= \int\limits_{-\infty}^{x} \dfrac{1}{\sqrt{2 \pi}}
  e^{-\frac{(\zeta-\mu)^2}{2\sigma}}d \zeta.
\end{align*}
The error function erf, defined as $ \erf(x)=\dfrac{2}{\sqrt{\pi}}
\int\limits_0^x e^{-\zeta^2} du$, denotes the probability of a normal
random variable with mean 0 and variance 1/2 being in the interval
$[-x,x]$. For a normal random variable with mean 0 and variance 1/2,
the functions $\Phi$ and erf are related by
\begin{align}\label{eq:erf}
  \Phi(x) = \frac{1}{2} \Big( 1 + \erf \big( \frac{x}{\sqrt{2}}
  \big)\Big).
\end{align}

\emph{Dynamic Average Consensus:} Consider a network of $n \in
\integer_{> 1}$ agents communicating over a strongly connected weight
balanced directed graph $\G$.  Each agent has a state $z_{i} \in
\real$ and an input signal $\map{u_{i}}{\real}{\real}$. Dynamic
average consensus aims at making each agent track the average input
$\frac{1}{n}\sum_{i=1}^n u_{i}(t)$ asymptotically. Formally, we employ
the dynamics given by
\begin{align*}
  \dot{z}&=\dot{u}- \nu (z - u) - \beta L z - v,
  \\
  \dot{v}&= \nu \beta \Lap z,
\end{align*}
where $\Lap \in \real^{n \times n}$ is the Laplacian of $\G$ and $\nu,
\beta> 0$ are the design parameters.  If the algorithm is initialized
with $\one^\top v(0)=0$, then the steady-state error between the state
$z_i$ of each agent $i \in \{1, \ldots, n\}$ and the average signal
$\frac{1}{n}\sum_{i=1}^n u_i$ is bounded, and goes to zero if $\dot{u}
\to \zero$, cf.~\cite[Theorem 4.1]{SSK-JC-SM:15-ijrnc}.

\section{Frequency Regulation with Microgrids}\label{current}
We are interested in coordinating power aggregators to collectively
provide frequency regulation.  An aggregator is a virtual entity that
aggregates the actions of a group of distributed energy resources to
act as a single whole. In this paper, we identify an aggregator with a
microgrid, but in general it may correspond to other entities (such
as, for instance, a collection of microgrids).

\subsection{Review of Current Practice}\label{sec:review}
The frequency regulation market is operated by an RTO to regulate the
system frequency at its nominal value. To achieve this, the RTO
coordinates the response of participating energy resources in a
centralized fashion to assign the regulation signal and restore the
power balance of the grid. Different RTOs follow slightly different
procedures for the frequency regulation markets. The procedure
followed by CAISO has the following stages, see e.g.,~\cite{MKM:14}:
\begin{figure}
\centering
    \includegraphics[width=0.47\textwidth]{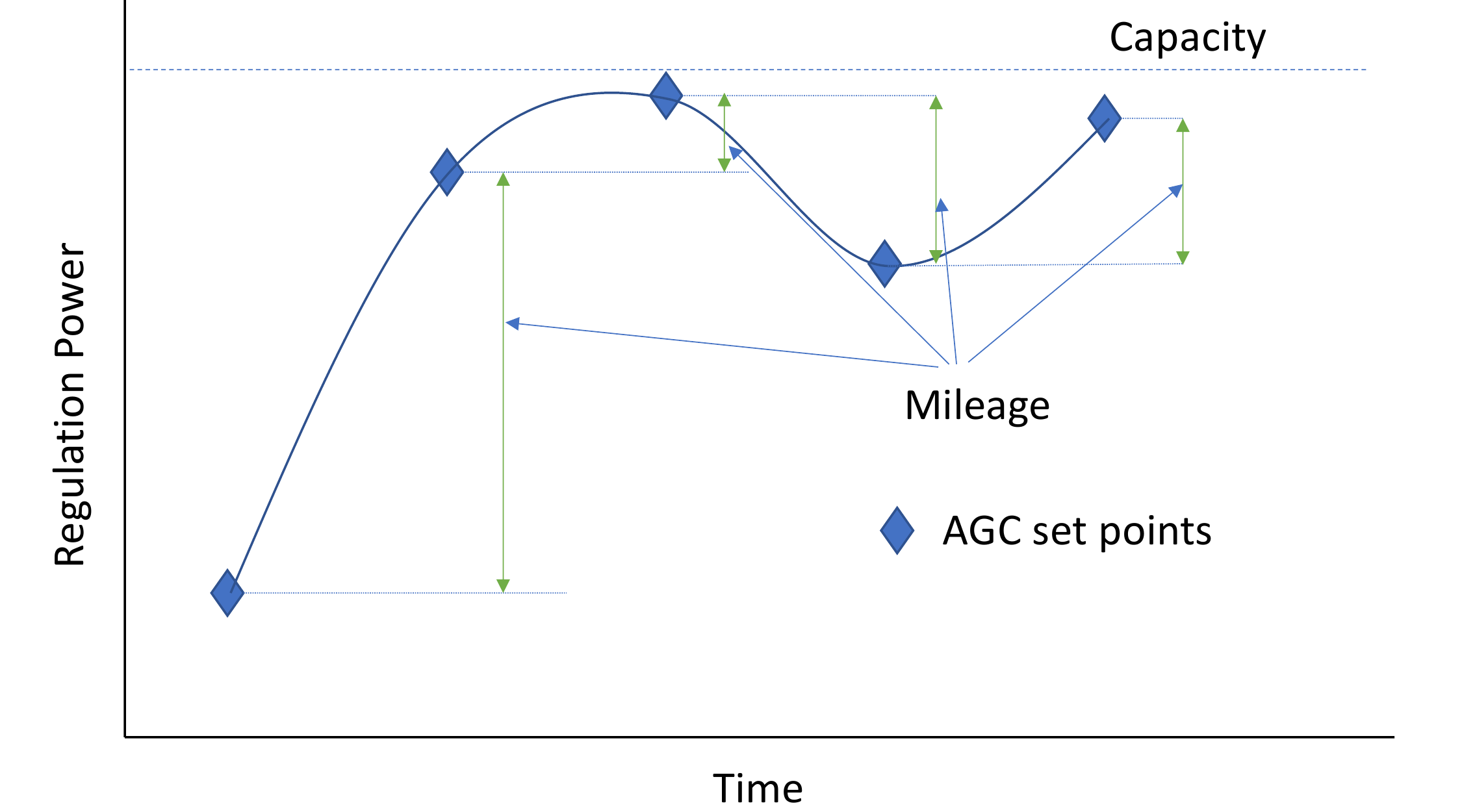}
    \caption{Illustration of the computation of capacity and
      mileage}\label{fig:capacity-mileage}
    \vspace*{-1ex}
\end{figure}

\textbf{[CP1]: Market clearance.} All participating resources submit
their capacity bids, capacity price bids, and mileage price bids to
the RTO. Capacity bids are the maximum amount of regulation (up or
down) that the resource can provide. Capacity price bids are the unit
price of providing these regulations. Mileage is the sum of the
absolute change in AGC set points, which corresponds to the summation
of the vertical lines in Figure~\ref{fig:capacity-mileage}. The
mileage price bid is the cost for unit change in
regulation. Typically, expected mileages are calculated from
historical data and resources do not submit mileage bids. The RTO
clears the market with a capacity and mileage price that is uniform
across the resources, and sends each resource its capacity and mileage
allocation. This off-line process happens only once per regulation
event.

\textbf{[CP2]: Allocation of regulation signal to each resource.}  The
RTO sends the regulation set points to each of the procured energy
resources every 2-4 seconds for the entire regulation period, which is
usually 10-15 minutes. The regulation set points are computed from the
AGC signal in real time in proportion to the procured mileage of each
resource. In case the assigned capacity of a resource is violated, the
overshoot power is redistributed to the other resources in proportion
to their assigned mileages.
  % \\
  % \textbf{Input:} Mileage of each resource\\
  % \textbf{Output:} Regulation signal to be tracked by each resource

\textbf{[CP3]: Real-time tracking of regulation signal.}  Once the
regulation set points have been assigned, the resources need to track
them in real time.

Payment to the resources comprises of two components, capacity payment
and mileage payment. The capacity payment is done based on the
assigned capacity in [CP1] while the mileage payment is done based on
the actual mileage provided which reflects the performance of the
resources while tracking the assigned signal in~[CP3].

\subsubsection{Limitations of Current Practice}
The centralized way of assigning the set points to the resources in
[CP2] relies on the fixed number of resources with fixed generation
capacities procured in [CP1], which are available for the entire
regulation period. This is problematic in the context of aggregators,
as they are subject to variabilities and uncertainties associated with
the DERs inside them. Even if the DERs inside the microgrid
participating stay during the regulation period, the users inside the
microgrid can change their power consumption, which in turn leads to
changes in the effective regulation capacity.  Furthermore, in current
practice, there is no direct consideration of the operational costs of
the resources, which may result in suboptimal power
allocation. Instead, we argue that the assignment of the regulation
signal should be done, at each time step, in a way that optimizes the
aggregate cost functions of the resources and takes into account their
(possibly dynamic) operational limits. We refer to this approach as
the RTO-DERP coordination problem. This idea has also been pointed out
in the past by CAISO for traditional energy resources,
cf.~\cite{JB-SMH-BFH:12}. The lack of robustness and the information
sharing requirements of centralized schemes motivate the investigation
of distributed schemes to solve the RTO-DERP coordination problem.

\subsubsection{Challenges for Frequency Regulation from Microgrids}
% Now that we have identified the limitations of current practice (which
% hold even for conventional resources), next
Here we describe the challenges specific to microgrid participation in
frequency regulation markets.  First, note that solving the RTO-DERP
coordination problem with microgrids requires the identification, or
rather the abstraction, of aggregate cost functions and regulation
capacity bounds based on the cost functions and flexibilities of their
DERs.  Second, the determination of capacity bids requires taking into
account the uncertainties associated with the microgrids.  There is a
need to calculate bids for each regulation interval, as they might
need to considerably change from one interval to the next.  Even
within a regulation interval itself, the power level of the
uncontrollable nodes might vary significantly.
% As such, we need to assume some probability distribution of the
% loads with some mean and variance to quantify the bids.
Third, the determination of mileage bids has to take into account the
dependency of ramp rates on the composition and participation of the
individual DERs.  The current method of calculating expected mileages
in [CP1] makes sense for conventional resources as their ramp rates
are fixed and historical data provides reliable accuracy. In the case
of microgrids, individual resources may be changing over time and ramp
rates might not remain constant as a result. Also, the performance of
participating resources for one regulation period to another might be
substantially different.
% Hence, we are interested in developing methods to quantify the
% microgrid ramp rate to enable them to submit mileage bids for each
% regulation period.

\subsection{Problem Statement}\label{prob}
Consider $N$ microgrids, each controlled by an aggregator.  To enable
microgrid participation in the frequency regulation market, we focus
on [CP1] and [CP2].  Based on the discussion in
Section~\ref{sec:review}, we formalize the following problems.

\textbf{[P1]: Meaningful abstractions for the microgrid.}  To enable
the submission of bids in [CP1], each aggregator needs to quantify the
maximum up/down regulation capacity that the microgrid can provide,
the unit cost of providing such regulation, and the ramp rate at which
the microgrid can change its power level. 
% This is specially important as microgrids themselves are a
% combination of energy resources whose availability changes over time
% and whose performance might be substantially different at different
% regulation periods. For a fixed period, even if the energy resources
% inside the microgrid remain constant, the uncontrollable loads might
% change over time and hence the values of effective energy resources
% would change.
Our first goal is therefore to provide meaningful abstractions for
these objects, and cost functions and ramp rate functions of the
microgrids for [P2] below, a problem we tackle in Section~\ref{sec:abstractions}.

\textbf{[P2]: RTO-DERP distributed coordination.}  The RTO-DERP
coordination problem for computing the set points for each resource
advocated for [CP2] consists of an economic dispatch problem with ramp
rate constraints \emph{at every instant of the regulation interval}.
Formally, for $x_r$ regulation at a given time instant, we have
\begin{align}\label{eq:iso}
  & {\min_{x}} & & f(x) = \sum\limits_{\im=1}^N f_{\im}(x_{\im})
  \\
  & \text{s.t.} & & \sum_{\im =1}^{N}x_{\im}=x_r \notag
  \\
  &&& \underline{x_{\im}} \leq x_{\im} \leq \overline{x_{\im}} \quad
  \forall \im \displaybreak[0] \notag
  \\
  &&& |x_{\im} - x_{\im}^-| \leq R_{\im}(x_{\im}^-) \quad \forall \im,
  \notag
\end{align}
where $x \in \mathbb{R}^N$ is the vector of regulation power from the
microgrids, $f_{\im} (x_{\im})$ is the cost of $x_{\im}$ regulation
for microgrid $\im$, $\underline{x_{\im}}$ and $\overline{x_{\im}}$
are the lower and upper bounds of regulation for microgrid ${\im}$
which are bounded by the solutions of~[P1] and determined by~[CP1] for
a specific regulation period, $x_{\im}^-$ is the regulation that the
microgrid $\im$ was providing at the previous instant, and
$R_{\im}(x_{\im})$ is the ramp rate of the microgrid when it is
providing regulation $x_{\im}$. 
% Note that this problem requires the identification in [P1] of cost
% functions and ramp rate functions for the microgrids, in addition to
% the abstraction of just the capacity price bids and ramp rates for
% [CP1].
%
Because of the ramp constraints present in~\eqref{eq:iso}, this
problem might not be always feasible (since mileage requirements set
by the RTO while clearing the market in [CP1] capture the average
mileage required, and not the extreme cases). In such cases, we want
to minimize the error between the procured regulation and the required
one. We tackle these in Section~\ref{sec:rto-derp}.

\section{Microgrid Abstractions}\label{sec:abstractions}
Consider a microgrid with $n \in \mathbb{Z}_{>1}$ buses, described by
$\G_m=(\V,\E,\A)$. Without loss of generality, we assume that the
first bus is connected to the bulk grid through a tie line.  We
partition the remaining set of buses as $\V_g\cup\V_l$, where $\V_g$
is the set of controllable nodes and $\V_l$ is the set of
uncontrollable nodes (or loads). Let $n=|\V|$, $n_g=|\V_g|$,
$n_l=|\V_l|$ and $m=|\E|$. Following~\cite{DF-MZ-ADDG:18},
  we assume that the lines connecting various buses inside the
  microgrid are lossless and inductive.  In case the electrical lines
  inside the microgrid are lossy with sufficiently uniform resistance
  to reactance ratios, they could still be represented via a lossless
  model obtainable through a linear
  transformation~\cite{FD-JWSP-FB:16-tcns}.  Since the voltage
  dynamics governed by the voltage droop controllers operate at much
  faster scale than the secondary frequency
  regulation~\cite{LL-SD:14}, we assume the voltage magnitude of every
  bus to be approximately 1~p.u.  Further, we assume that the network
and inverter filter %, and voltage controller
dynamics are fast enough so that we can model them as power injections
with no dynamics~\cite{OA-MA-IC-PWS-ADDG:17,QCZ-TH:12}.  We adopt the
convention that the value of the power injection is negative if the
device consumes power and vice versa. The power level of each
controllable node $p \in \V_g$ is denoted by $g_{p}$, with $g_{p}^0$
denoting the baseline generation/consumption. The power level of each
uncontrollable node $q \in \V_l$ is denoted by~$l_{q}$.  We denote the
incoming power through the tie line by $P$ and its baseline value by
$P^0$.  When the microgrid provides frequency regulation, the value of
the tie line power $P$ is
\begin{align*} 
  P = P^0 + x ,
\end{align*}
where $x$ is the allocated AGC signal. Note that since we model $P$ as
the incoming power from bulk grid, $x$ would be negative when the
microgrid is providing up regulation.
Following~\cite{MZ-ADDG:20}, the power injections for the microgrid are given by
\begin{equation}\label{eq:pf1}
  \begin{bmatrix}
    P & g^\top & -l^\top
  \end{bmatrix}^\top =\M B\text{sin}(\M^\top\theta),
\end{equation}
where $g \in \mathbb{R}^{n_g}$ and $l \in \mathbb{R}^{n_l}$ are
the vectors of controllable and uncontrollable nodes, resp., 
$\M \in \mathbb{R}^{n \times m}$ is the incidence matrix of the graph,
%for an arbitrary edge orientation, 
$B \in \mathbb{R}^{m \times
 m}$ is the diagonal matrix of absolute line susceptances and
$\theta \in \mathbb{R}^{n}$ is the vector of phase angles.
Additionally, there is a constraint that the maximum phase angle
difference between any two interconnected buses should be bounded by
$\Theta \in \real_{\ge 0}$, i.e.,
\begin{equation}\label{eq:angle_diff}
  |\theta_i - \theta_k| \leq \Theta.
\end{equation}
To avoid dealing with the nonlinearity in~\eqref{eq:pf1}, 
%we make the
%following assumption for simplicity.
%
%\begin{assumption}\longthmtitle{No common edge between
%    loops}\label{ass:noloop}
%  $\G_m$ is a graph with non-overlapping loops.
%\end{assumption}
we assume $\G_m$ is a graph with non-overlapping loops and
%This assumption allows to rewrite the power flow equations in terms of
%the line flows as follows,
rewrite the power flow equations as
\begin{subequations}\label{eq:pff}
  \begin{align}\label{eq:pffa}
    &
    \begin{bmatrix}
      P & g^\top & -l^\top
    \end{bmatrix}^\top =\M\omega,
    \\
    \label{eq:pfcf}
    & \: |\omega| \leq \overline{\omega}.
  \end{align}
\end{subequations}
Here $\omega \in \mathbb{R}^{m}$ is the vector of line flows and
$\overline{\omega} \in \mathbb{R}^{m}$ is the vector of maximum
permissible flows.  Since the columns of the fundamental loop matrix
form a basis for the null space of the incidence matrix,
cf.~\cite[Theorem 4-6]{SS-MBR:61}, we write~\eqref{eq:pff}~as
\begin{align}\label{eq:pseudoinv}
  \big| \M^+
  \begin{bmatrix}
    \mathbf{1}^\top l - \mathbf{1}^\top g
    \\
    g
    \\
    -l
  \end{bmatrix} + \N \gamma \big| \leq \overline{\omega},
\end{align}
where $\M^+$ denotes the Moore-Penrose pseudoinverse of $\M$, $\N \in
\real^{m \times (m-n+1)}$ is the fundamental loop matrix of $\G_m$, and
$\gamma \in \real^{m-n+1}$.

\subsection{Capacity Bounds}\label{sec:bounds}
The microgrid needs to solve an optimization problem to find the
maximum up (or down) regulation that it can provide.  For up
regulation, the power consumption of the microgrid is less than the
baseline power. Since the latter is constant for the
regulation period, computing the capacity is equivalent to minimizing
$P$ while satisfying the power flow constraints. If the power level of
uncontrollable nodes is constant for the entire regulation period,
then the problem reads as
\begin{equation}\label{eq:upreg}
  \begin{aligned}
    & \min_{g,\omega} & & P
    \\
    & \text{ s.t.}  && \begin{bmatrix} P & g^\top & -l^\top
    \end{bmatrix}^\top=\M\omega
    \\
    &&& \; \underline{g} \leq g \leq \overline{g} , \quad |\omega|
    \leq \overline{\omega},
  \end{aligned}
\end{equation}
where $\underline{g}$ and $\overline{g}$ are the vectors of minimum
and maximum possible power levels of controllable nodes,
respectively. If $\underline{P}$ denotes the solution
of~\eqref{eq:upreg}, then the maximum up regulation is $
\overline{x}=\underline{P}-P^0$.  The maximum down regulation
$\underline{x}$ can be obtained solving a similar maximization
problem. 

The formulation~\eqref{eq:upreg} assumes the power level of the
uncontrollable nodes remains constant, and therefore does not take
into account the varying nature of the loads.  In practice, this makes
sense for a specific regulation instant, and would rarely be the case
for the whole regulation period.  Instead, a more robust way of
calculating the capacity bounds that the aggregator can use in bidding
for the whole regulation period is to account for worst-case
scenarios, i.e., taking the expected maximum value for the
uncontrollable nodes while computing the maximum up
regulation. Although robust to variations in the uncontrollable nodes'
powers, this way of computing capacity bounds might be too
conservative and, in fact, might prohibit the microgrid from
participating in the regulation market at all.  As an alternative, we
propose a reformulation of problem~\eqref{eq:upreg} based on chance
constraints.  Using~\eqref{eq:pseudoinv}, we rewrite the optimization
problem~\eqref{eq:upreg} as
\begin{equation}\label{eq:upregcombine}
  \begin{aligned}
    & \min_{g, \gamma, t} & & t
    \\
    & \text{ s.t.}  && t \geq \mathbf{1}^\top l - \mathbf{1}^\top g
    \\
    &&& \big| \M^+
      \begin{bmatrix}
        \mathbf{1}^\top l - \mathbf{1}^\top g
        \\
        g
        \\
        -l
      \end{bmatrix} + \N \gamma \big| \leq \overline{\omega}
      \\
      &&& \underline{g} \leq g \leq \overline{g}.
  \end{aligned}
\end{equation}
Assume that a probability distribution describing the power levels of
uncontrollable nodes at any instant of the regulation period is
available.  To account for load variability, we instead consider the
following chance-constrained optimization
%\begin{equation}
  \begin{align}\label{eq:upregchance}
    & \min_{g, \gamma, t}
    & & t  \notag \\
    & \text{ s.t.}  && \Pr (t \geq \mathbf{1}^\top l -
    \mathbf{1}^\top g) \geq 1- \epsilon'
    \\
    &&& \Pr \Big( \big| \M^+ 
    \begin{bmatrix}
      \mathbf{1}^\top l - \mathbf{1}^\top g
      \\
      g
      \\
      -l
    \end{bmatrix}
    + \N \gamma \big|_j \leq \overline{\omega}_j \Big) \geq 1-\epsilon \quad \forall j
    \notag \\
    &&& \underline{g} \leq g \leq \overline{g}.\notag
  \end{align}
%\end{equation}
where $\epsilon', \epsilon \in [0,1]$. In this formulation, each flow constraint can be
violated, with a probability no more than $\epsilon$. 

Since the regulation period lasts for only a short period of
  time (10-15 minutes), the variation in the loads would not be
  significant and it is reasonable to assume it could be approximately
  characterized by a normal distribution. The next result, whose
proof is in the Appendix, shows that the chance-constrained
optimization~\eqref{eq:upregchance} can be solved via a deterministic
linear program if the loads are normally distributed.

\begin{lemma}\longthmtitle{Capacity bounds for variable loads via 
    deterministic optimization}\label{lemma:chance to linear}
  Assume the loads are distributed normally with mean $\hat{l}$ and
  variance $V_l$.  Then, the solution of the deterministic linear
  program
  \begin{align}
    \label{eq:chance-det}
    & \min_{g, \gamma, t} & & t
    \\
    & \textnormal{ s.t.}  && \mathbf{1}^\top \hat{l} - \mathbf{1}^\top g - t
    \leq \sqrt{2} \erf^{-1}(2 \epsilon'-1) (\mathbf{1}^\top V_l
    \mathbf{1})^{1/2} \notag
    \\
    &&& |(\M_{1}\mathbf{1}^\top-\M_{3})\hat{l} +
    (\M_{2}-\M_{1}\mathbf{1}^\top)g + \N \gamma| \leq  \overline{\omega}^l
    \notag
    \\
    &&& \underline{g} \leq g \leq \overline{g}, \notag
  \end{align}
  where $\M^+ = [\M_1 \quad \M_2 \quad \M_3]$ with $\M_1 \in \real^{m}$,
  $\M_2 \in \real^{m \times n_g}$ and $\M_3 \in \real^{m \times
    n_l}$, $\overline{\omega}^l = \overline{\omega} + K$ and
  \begin{multline*}
    K_j = \sqrt{2} \erf^{-1} ( \epsilon-1) \cdot
    \\
    \cdot
    \big((\M_{1j}\mathbf{1}^\top-\M_{3j})V_l
    (\M_{1j}\mathbf{1}^\top-\M_{3j})^\top\big)^{1/2},
  \end{multline*}
  is a solution of problem~\eqref{eq:upregchance}.
\end{lemma}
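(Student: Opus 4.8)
The plan is to use the fact that, since $l \sim \normal(\hat{l}, V_l)$, every affine function of $l$ is again normal, and to turn each probabilistic inequality in~\eqref{eq:upregchance} into a deterministic one through the standard normal CDF $\Phi$. The single identity I would rely on throughout is obtained by inverting~\eqref{eq:erf}: for a standard normal, $\Phi^{-1}(p) = \sqrt{2}\,\erf^{-1}(2p-1)$, together with the oddness $\erf^{-1}(-a)=-\erf^{-1}(a)$. Only $l$ is random; $g$, $\gamma$ and $t$ are deterministic decision variables, so I would treat them as constants when computing means and variances.

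For the first (regulation/cost) constraint, I would set $Y = \one^\top l - \one^\top g$, which is normal with mean $\one^\top\hat{l} - \one^\top g$ and variance $\one^\top V_l\one$. Writing $\Pr(t \ge Y) = \Phi\big((t - \one^\top\hat{l} + \one^\top g)/(\one^\top V_l\one)^{1/2}\big) \ge 1-\epsilon'$ and inverting $\Phi$ (using monotonicity and the $\erf^{-1}$ identity, where oddness flips the sign and produces $\erf^{-1}(2\epsilon'-1)$) yields exactly the first inequality of~\eqref{eq:chance-det}. This step is an exact, two-way equivalence.

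The flow constraints are where the real work lies. For each $j$ I would write the $j$th entry of the argument as $\Omega_j = (\M_{1j}\one^\top - \M_{3j})\,l + (\M_{2j} - \M_{1j}\one^\top)\,g + \N_j\gamma$, which is affine in $l$ and hence normal, with mean $\mu_j$ equal to the quantity inside the absolute value in~\eqref{eq:chance-det} and standard deviation $\sigma_j = \big((\M_{1j}\one^\top - \M_{3j})V_l(\M_{1j}\one^\top - \M_{3j})^\top\big)^{1/2}$. The obstacle is that $\Pr(|\Omega_j| \le \overline{\omega}_j) \ge 1-\epsilon$ is two-sided and has no clean single-inequality exact form. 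My plan is to enforce the sufficient pair of one-sided conditions $\Pr(\Omega_j > \overline{\omega}_j) \le \epsilon/2$ and $\Pr(\Omega_j < -\overline{\omega}_j) \le \epsilon/2$; each reformulates exactly via the $\erf^{-1}$ identity to $\pm\mu_j \le \overline{\omega}_j - \sigma_j\sqrt{2}\,\erf^{-1}(1-\epsilon)$, and together these are the single two-sided inequality $|\mu_j| \le \overline{\omega}_j + K_j = \overline{\omega}^l_j$, since $K_j = \sqrt{2}\,\erf^{-1}(\epsilon-1)\,\sigma_j = -\sqrt{2}\,\erf^{-1}(1-\epsilon)\,\sigma_j$ by oddness. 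A union bound then closes the argument: $\Pr(|\Omega_j| > \overline{\omega}_j) \le \epsilon/2 + \epsilon/2 = \epsilon$.

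Finally, I would assemble these into a containment of feasible sets: every $(g,\gamma,t)$ feasible for~\eqref{eq:chance-det} satisfies all constraints of~\eqref{eq:upregchance} (exactly for the first, conservatively via the union bound for the flow constraints). Since both problems minimize the same objective $t$ over the same variables, the optimizer of the deterministic linear program~\eqref{eq:chance-det} is feasible for---and hence a solution of---the chance-constrained problem~\eqref{eq:upregchance}. The points needing care are the sign bookkeeping in the $\erf^{-1}$ inversions (here $\epsilon-1 \le 0$, so $K_j \le 0$ and the flow bounds tighten, as expected for a robust reformulation), the equal split of the violation budget $\epsilon$ across the two tails, which is precisely what makes the derived constant match $K_j$, and keeping in mind that the flow reformulation is sufficient rather than necessary, so the guarantee delivered is constraint satisfaction (feasibility).
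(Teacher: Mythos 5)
Your proposal is correct and follows essentially the same route as the paper: convert each chance constraint to a deterministic one via the normal CDF and the relation $\Phi^{-1}(p)=\sqrt{2}\,\erf^{-1}(2p-1)$, splitting the two-sided flow constraint into two one-sided tails with budget $\epsilon/2$ each to recover $|\hat{\zeta}_j|\le\overline{\omega}_j+K_j$. The only difference is one of care rather than substance: you correctly flag that the $\epsilon/2$ split is sufficient (yielding feasibility, hence a conservative inner approximation), whereas the paper's proof loosely calls this step an equivalence after invoking mutual exclusivity of the two tail events.
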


\begin{figure}[htb]
  \centering
  \includegraphics[scale=0.45]{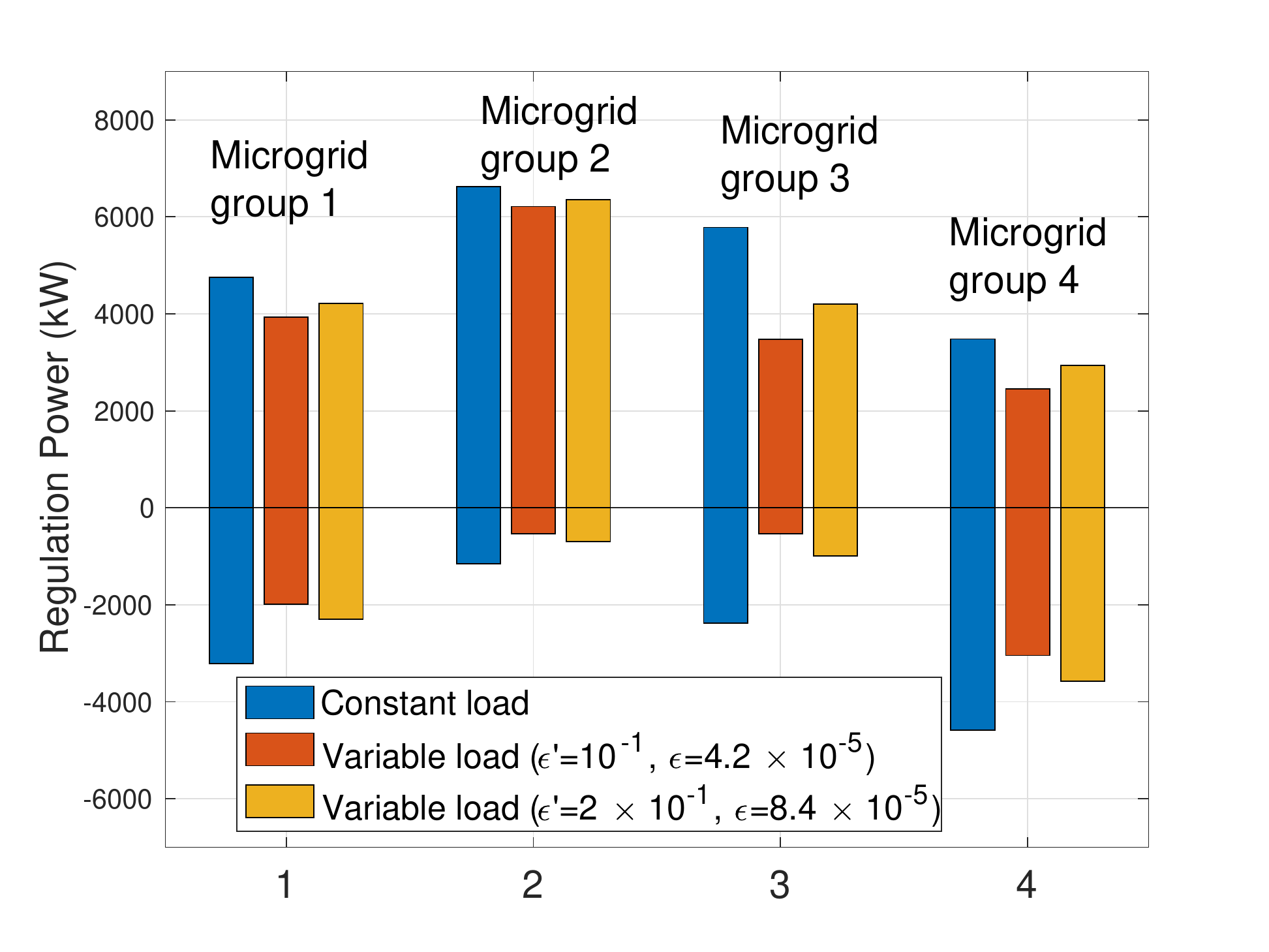}
  \caption{Regulation capacities for different instantiations of the
    reduced-order UCSD microgrid. The 12 microgrids are divided into 4
    groups, with constant mean value of loads and baseline generation
    across each group.  Within each group, for the first scenario, the
    variance is 0 and for the remaining two scenarios, it is given by
    the diagonal matrix which is 0.25 times the squared value of mean
    loads.  Values of $\epsilon', \epsilon$ for the second and third
    scenarios are $10^{-1}, 4.2 \times 10^{-5}$ and $2 \times 10^{-1},
    8.4 \times 10^{-5}$, respectively.  }\label{fig:capacity}
\end{figure}

We use Lemma~\ref{lemma:chance to linear} to compute in
Figure~\ref{fig:capacity} the maximum up and down regulation for
several microgrids modeled after the reduced-order UCSD microgrid
described later in Section~\ref{sec:sims}.  The microgrids are divided
into 4 groups, each with a different value of baseline generation and
mean load for the UCSD model. Within each group, we consider 3
different scenarios, one with constant load and the other two with
varying loads (generated using the same normal distribution with
variance equal to a diagonal matrix given by 0.25 times the squared
value of mean loads) and different confident values ($\epsilon',
\epsilon = 10^{-1}, 4.2 \times 10^{-5}$ and $2 \times 10^{-1}, 8.4
\times 10^{-5}$, respectively). One can see in
Figure~\ref{fig:capacity} that the capacity bounds increase
with~$\epsilon', \epsilon$, which is in agreement with the fact that
larger values of these correspond to lower probability of satisfying
the constraints.

Note that the probabilistic capacity bounds identified above and
obtained after solving~\eqref{eq:chance-det} are good only for the
bidding in [CP1].  The actual regulation bounds at a given regulation
instant still depend on the load at that instant.

\subsection{Ramp Rate Function}\label{sec:ramp}
In the following we discuss how to compute the ramp up rate for the
microgrid (the discussion for ramp down rate is analogous).  If there
were no constraints on the power flows, then the ramp rate of the
microgrid would be the summation of ramp rates of all the controllable
nodes.  However, the presence of flow constraints may prevent every
controllable node from ramping at its full capacity and as such, the
ramp rate is a function that depends on the operating point of the
controllable nodes.  Let $\F_g = \setdef{g \in \real^{n_g}}{\exists \;
  \omega \in \real^m \text{ satisfying~\eqref{eq:pff}}}$ denote the
set of feasible operating points for controllable nodes.  If the power
levels of the uncontrollable nodes are constant, then the ramp up
rate, $\R : \F_g \to \real_{\ge 0}$, is formally given by
\begin{align}\label{eq:rc}
  & \max_{\Delta g, \Delta \omega} & & \mathbf{1}^\top \Delta g
  \\
  & \text{ s.t.}  & & 
  \begin{bmatrix}
    (P-\mathbf{1}^\top \Delta g) & \hspace*{-1mm} (g+\Delta g)^\top &
    \hspace*{-2mm} -l^\top
  \end{bmatrix}^\top =\M(\omega+ \Delta \omega) \nonumber
  \\
  &&&\Delta g \leq r, \quad |\omega+\Delta \omega| \leq
  \overline{\omega} \nonumber,
\end{align}
where $r \in \real^{n_g}$ is the vector whose component $r_p$ is the
nominal ramping capacity of the controllable node~$p$, and
$\omega+\Delta \omega$ is the vector of line flows corresponding to
the operating point $g+ \Delta g$.  

If the power levels of the uncontrollable nodes are variable, we use
chance-constraints as in the case of capacity bounds and the ramp up
rate $\R$ is given by
\begin{align}\label{eq:rc_chance}
  & \max_{\Delta g, \gamma} & & \mathbf{1}^\top \Delta g
  \\
  & \textnormal{s.t.}  & & \hspace*{-17pt} \Pr \Big( \big| \M^+
  \!\! \begin{bmatrix} \one^\top l - \one^\top(g+ \Delta g) \\
    g+\Delta g \\ -l
  \end{bmatrix}\!\! + \N \gamma \big|_j \le \overline{\omega}_j \Big)
  \ge 1-\epsilon \; \forall j \nonumber
  \\
  &&& \hspace*{-13pt} \Delta g \leq r. \notag
\end{align}
The following result, whose proof is similar to that  of
Lemma~\ref{lemma:chance to linear} and omitted to avoid repetition,
converts the chance-constrained optimization~\eqref{eq:rc_chance} into
a deterministic linear program if the loads are normally distributed.

\begin{lemma}\longthmtitle{Ramp rate for variable loads via deterministic 
    optimization}\label{lemma:chance to linear ramp}
  Assume the loads are distributed normally with mean $\hat{l}$ and
  variance $V_l$.  Then, the solution of the deterministic linear
  program
  \begin{align}
    \label{eq:chance-det-rc}
    & \max_{\Delta g, \gamma} & & \one^\top \Delta g
    \\
    & \textnormal{s.t.}  && \hspace*{-5pt}
    |(\M_{1}\one^\top-\M_{3})\hat{l} + (\M_{2}-\M_{1}\one^\top)(g+
    \Delta g) + \N \gamma| \leq \overline{\omega}^l \notag
    \\
    &&& \hspace*{-3pt} \Delta g \leq r, \notag
  \end{align}
  where $\M_1$, $\M_2$, $\M_3$ and $\overline{\omega}^l$ are as
  defined in Lemma~\ref{lemma:chance to linear}, is a solution for
  problem~\eqref{eq:rc_chance}.
\end{lemma}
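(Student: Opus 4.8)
The plan is to follow the proof of Lemma~\ref{lemma:chance to linear} verbatim, exploiting that problem~\eqref{eq:rc_chance} is strictly simpler: it contains a single family of chance constraints (the flow constraints), with no auxiliary variable $t$ and no capacity-type probabilistic inequality to reformulate. Consequently the entire argument collapses to converting the $j$th flow chance constraint into a deterministic linear inequality; the ramp bound $\Delta g \le r$ and the objective $\one^\top \Delta g$ are already deterministic and pass through unchanged, with the current operating point $g$ fixed and $\Delta g$ the decision variable.

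First I would isolate the randomness. The load vector $l$ is the only random quantity, and the argument of the absolute value in~\eqref{eq:rc_chance} is affine in $l$: expanding with the block partition $\M^+ = [\M_1\ \M_2\ \M_3]$ it equals $(\M_1\one^\top - \M_3)\,l + (\M_2 - \M_1\one^\top)(g+\Delta g) + \N\gamma$, exactly as in Lemma~\ref{lemma:chance to linear} but with $g$ replaced by $g+\Delta g$. Its $j$th component is therefore a scalar $Z_j$ that, since $l \sim \normal(\hat l, V_l)$ and $Z_j$ is affine in $l$, is Gaussian with mean $\mu_j = (\M_{1j}\one^\top-\M_{3j})\hat l + (\M_{2j}-\M_{1j}\one^\top)(g+\Delta g) + (\N\gamma)_j$ and variance $\sigma_j^2 = (\M_{1j}\one^\top-\M_{3j})\,V_l\,(\M_{1j}\one^\top-\M_{3j})^\top$, the coefficient of $l$ being the row $\M_{1j}\one^\top-\M_{3j}$.

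Next I would reduce the two-sided requirement $\Pr(|Z_j| \le \overline{\omega}_j) \ge 1-\epsilon$ to deterministic form. Because the tail events $\{Z_j > \overline{\omega}_j\}$ and $\{Z_j < -\overline{\omega}_j\}$ are disjoint, it suffices to allocate $\epsilon/2$ to each tail; each resulting one-sided Gaussian constraint is equivalent, through the CDF $\Phi$, to $\mu_j \le \overline{\omega}_j - \sigma_j\Phi^{-1}(1-\epsilon/2)$ and $-\mu_j \le \overline{\omega}_j - \sigma_j\Phi^{-1}(1-\epsilon/2)$, i.e. $|\mu_j| \le \overline{\omega}_j - \sigma_j\Phi^{-1}(1-\epsilon/2)$. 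Invoking identity~\eqref{eq:erf} to write $\Phi^{-1}(1-\epsilon/2) = \sqrt{2}\,\erf^{-1}(1-\epsilon) = -\sqrt{2}\,\erf^{-1}(\epsilon-1)$ turns the margin $-\sigma_j\Phi^{-1}(1-\epsilon/2)$ into precisely $K_j$ of Lemma~\ref{lemma:chance to linear}, so the constraint becomes $|\mu_j| \le \overline{\omega}_j + K_j = \overline{\omega}^l_j$. Stacking over $j$ recovers the single deterministic flow inequality in~\eqref{eq:chance-det-rc}, and since the objective and $\Delta g \le r$ are untouched, any optimizer of~\eqref{eq:chance-det-rc} is feasible and optimal for~\eqref{eq:rc_chance}.

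The step I expect to be most delicate is the conversion of the two-sided absolute-value chance constraint: one must justify the tail-splitting (disjointness of the two tails and the $\epsilon/2$ allocation to each) and then track the sign conventions carefully through $\Phi^{-1}$, the oddness of $\erf^{-1}$, and the $\sqrt{2}$ factor, so that the resulting margin is exactly $K_j$ rather than a spurious expression such as one involving $\erf^{-1}(1-2\epsilon)$. The accompanying block-matrix bookkeeping that identifies $\M_{1j}\one^\top - \M_{3j}$ as the coefficient of $l$ is routine but must match the stated form. Since all of this is identical to Lemma~\ref{lemma:chance to linear} once $g$ is replaced by $g+\Delta g$, I would simply point to that proof and note only these two differences, justifying the omission stated in the paper.
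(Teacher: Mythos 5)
Your proposal is correct and follows essentially the same route as the paper, which omits this proof precisely because it is the argument of Lemma~\ref{lemma:chance to linear} with $g$ replaced by $g+\Delta g$, the auxiliary variable $t$ absent, and the deterministic constraint $\Delta g \le r$ passing through unchanged. Your tail-splitting, the identification of the mean and variance of the affine Gaussian $Z_j$, and the $\erf^{-1}$ sign bookkeeping yielding $\overline{\omega}^l_j = \overline{\omega}_j + K_j$ all match the paper's computation.
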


The next result states the properties of the ramp rate
function~\eqref{eq:rc} for a tree network.  The proof, given in the
Appendix, is based on the description of the feasible region in terms
of the power levels of the controllable nodes stated in
Lemma~\ref{lemma:simplify}.  For the ramp rate function with normally
distributed loads defined in~\eqref{eq:rc_chance}, one can obtain a
similar result following Lemma~\ref{lemma:chance to linear ramp} (with
$\overline{\omega}$ replaced by $\overline{\omega}^l$).

\begin{proposition}\longthmtitle{Ramp rate of tree
    network}\label{prop1}
  Let $\G_m$ be a tree and $H$ denote the hyperrectangle describing
  the region of operation of the controllable nodes, where opposite
  faces correspond to the minimum and maximum possible power level of
  a controllable node.  Then the ramp rate $\R$ is piecewise affine on
  $H$, i.e., for some $s>0$, $H$ admits a decomposition
  \begin{align*}
    H = V_1 \cup V_2 \cup \ldots \cup V_s,
  \end{align*}
  where $\{V_{\alpha}\}_{\alpha=1}^s$ are polyhedra, and $\R$ is
  affine on each $V_{\alpha}$.
\end{proposition}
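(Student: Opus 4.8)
The plan is to recognize~\eqref{eq:rc} as a parametric linear program in which the current operating point $g$ plays the role of the parameter, and then invoke the classical fact that the optimal value of such a program is piecewise affine in the parameter.

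First I would exploit the tree structure to eliminate the line-flow variables. For a tree one has $m=n-1$, so the fundamental loop matrix $\N$ is vacuous and, by~\eqref{eq:pseudoinv} (equivalently, by the explicit projection of the feasible set furnished by Lemma~\ref{lemma:simplify}), the flows are uniquely determined by the injections, $\omega=\M^+[\one^\top l-\one^\top g,\; g^\top,\; -l^\top]^\top$. Replacing $g$ by $g+\Delta g$ and using $\M^+=[\M_1\ \M_2\ \M_3]$, the increment satisfies $\Delta\omega=(\M_2-\M_1\one^\top)\Delta g$, so the constraint $|\omega+\Delta\omega|\le\overline\omega$ becomes a system of linear inequalities $|(\M_2-\M_1\one^\top)\Delta g + c(g,l)|\le\overline\omega$ in $\Delta g$ alone, where the constant term $c(g,l)=\omega(g)$ is affine in $g$ (the loads $l$ being fixed in~\eqref{eq:rc}). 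Hence~\eqref{eq:rc} reduces to
\begin{align*}
  \R(g)=\max_{\Delta g}\ \{\,\one^\top\Delta g \ :\ \Delta g\le r,\ |(\M_2-\M_1\one^\top)\Delta g+c(g,l)|\le\overline\omega\,\},
\end{align*}
an LP in which the parameter $g$ enters only affinely, through the right-hand side. For every $g\in H$ at which $\R$ is defined (i.e.\ $g\in\F_g$, so that $\Delta g=\zero$ is feasible), the program is feasible, and since $\one^\top\Delta g\le\one^\top r$ it has finite value; by Lemma~\ref{lemma:simplify} the admissible set $H\cap\F_g$ is itself a polyhedron, which is the region over which the decomposition will be produced.

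Next I would appeal to parametric LP duality. Writing the constraints compactly as $\widetilde A\,\Delta g\le b+Fg$ for fixed $\widetilde A,b,F$, the dual feasible set $\{y\ge\zero:\widetilde A^\top y=\one\}$ is a fixed, pointed polyhedron, independent of $g$, with finitely many vertices $y_1,\dots,y_p$. By strong duality $\R(g)=\min_{k}(b+Fg)^\top y_k$, a pointwise minimum of finitely many affine functions of $g$, which is exactly a continuous, concave, piecewise-affine function. The regions on which a given vertex $y_k$ attains the minimum are polyhedra; intersecting them with $H\cap\F_g$ yields the decomposition $H=V_1\cup\cdots\cup V_s$ into polyhedra on each of which $\R$ is affine, with $s$ finite because the dual vertex set is finite.

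I expect the main obstacle to be the careful bookkeeping around degeneracy and the boundaries between pieces: when the parametric LP is degenerate, several dual vertices are simultaneously optimal along the interfaces between critical regions, so I must verify that the neighboring affine pieces agree on shared faces (which is what guarantees the global continuity of the concave piecewise-affine value function) and that the collection $\{V_\alpha\}$ is a genuine polyhedral cover of $H$ rather than merely of its interior. I would also lean on Lemma~\ref{lemma:simplify} to justify precisely that, for a tree, the feasible region projects onto an explicit polyhedron in the $g$-coordinates, making the elimination of $\omega$ above exact; this is where the no-loop hypothesis is essential, since for a general graph the residual freedom $\N\gamma$ would reintroduce auxiliary variables and obscure the clean parametric right-hand-side structure that drives the argument.
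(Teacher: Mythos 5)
Your proof is correct, but it takes a genuinely different route from the paper's. Both arguments start from the same reduction: for a tree the flows are uniquely determined by the injections, so the feasible set of $\Delta g$ in~\eqref{eq:rc} is a polyhedron whose right-hand side is affine in $g$ (this is exactly Lemma~\ref{lemma:simplify}, which you re-derive via $\M^+$). From there the paper proceeds constructively: it partitions $H$ according to which of the $n-1$ flow constraints are violated by the full ramp $\Delta g = r$, and on each resulting polyhedron ($V_1$, $V_2$, $V_3$, \ldots) it exhibits the optimal $\Delta g$ and an explicit affine formula for $\R$ (e.g.\ $\R(g)=\one^\top r$ on $V_1$, $\R(g)=\one^\top r-\PP_{1j'}^\top g$ on $V_2$). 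You instead invoke parametric-LP duality: since $g$ enters only through the right-hand side, $\R(g)=\min_k (b+Fg)^\top y_k$ over the finitely many vertices of the fixed, pointed dual polyhedron, hence concave piecewise affine, with the critical regions giving the decomposition. Your route is shorter and handles the multiple-active-constraint case rigorously and uniformly (the paper's treatment of the region $V_3$, where several constraints bind simultaneously, is somewhat informal), and it yields concavity and continuity of $\R$ for free; the paper's route yields the explicit slopes of the affine pieces, which is what underpins the subsequent remark that $\R$ is non-increasing in $g$ --- a property your duality argument does not immediately deliver without inspecting the signs of $F^\top y_k$. One caveat you correctly flag, and which is equally present in the paper's proof, is that $\R$ is only defined on $\F_g$, so the decomposition is really of $H\cap\F_g$ rather than of all of $H$.
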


\begin{remark}\longthmtitle{Ramp rate for networks with
    non-overlapping loops}
  If the network is not a tree, then the flows corresponding to a
  power injection vector are not unique. Nevertheless, the ramp rate
  for networks with non-overlapping loops is a non-increasing function
  of $g$, as the feasible region of \eqref{eq:rc} can only shrink with
  increase in some component(s) of~$g$.
%(which means increase in some and decrease in none of the components of $g$). 
  \hfill $\bullet$
\end{remark}

Given a regulation power $x$, we note that there may be more than one
feasible operating point for the microgrid that produces it. As a
result, the ramp rate as a function of regulation power is not
uniquely defined.
We address this by defining $R: [\overline{x},\underline{x}] \to \real_{\ge 0}$, as
\begin{align*}
  R(x)={\max\limits_{g^*}}\: \R(g^*) ,
\end{align*}
%Let us denote the mapping from $x$ to $g^*$ by $F$, i.e.
where $g^*$ denotes a minimizer of the cost of producing the
regulation~$x$ while respecting the power flow and capacity
constraints. We take the maximum, since the optimizer $g^*$ might not
be unique.  If the cost functions for all the controllable nodes are
convex, each $g^*$ is a decreasing function with respect to $x$, which
means that at least one component of $g^*$ would decrease as $x$
increases (using the convention that up regulation is negative). Using
this fact, we conclude that $R$ as a function of $x$ is
non-decreasing, with maximum possible value as~$\mathbf{1}^{\top}r$.
Figure~\ref{fig:ramp} provides the ramp rate functions of the four
groups of microgrids displayed in Figure~\ref{fig:capacity} in the
constant load case.

\begin{figure}[htb]
  \centering
  \includegraphics[scale=0.45]{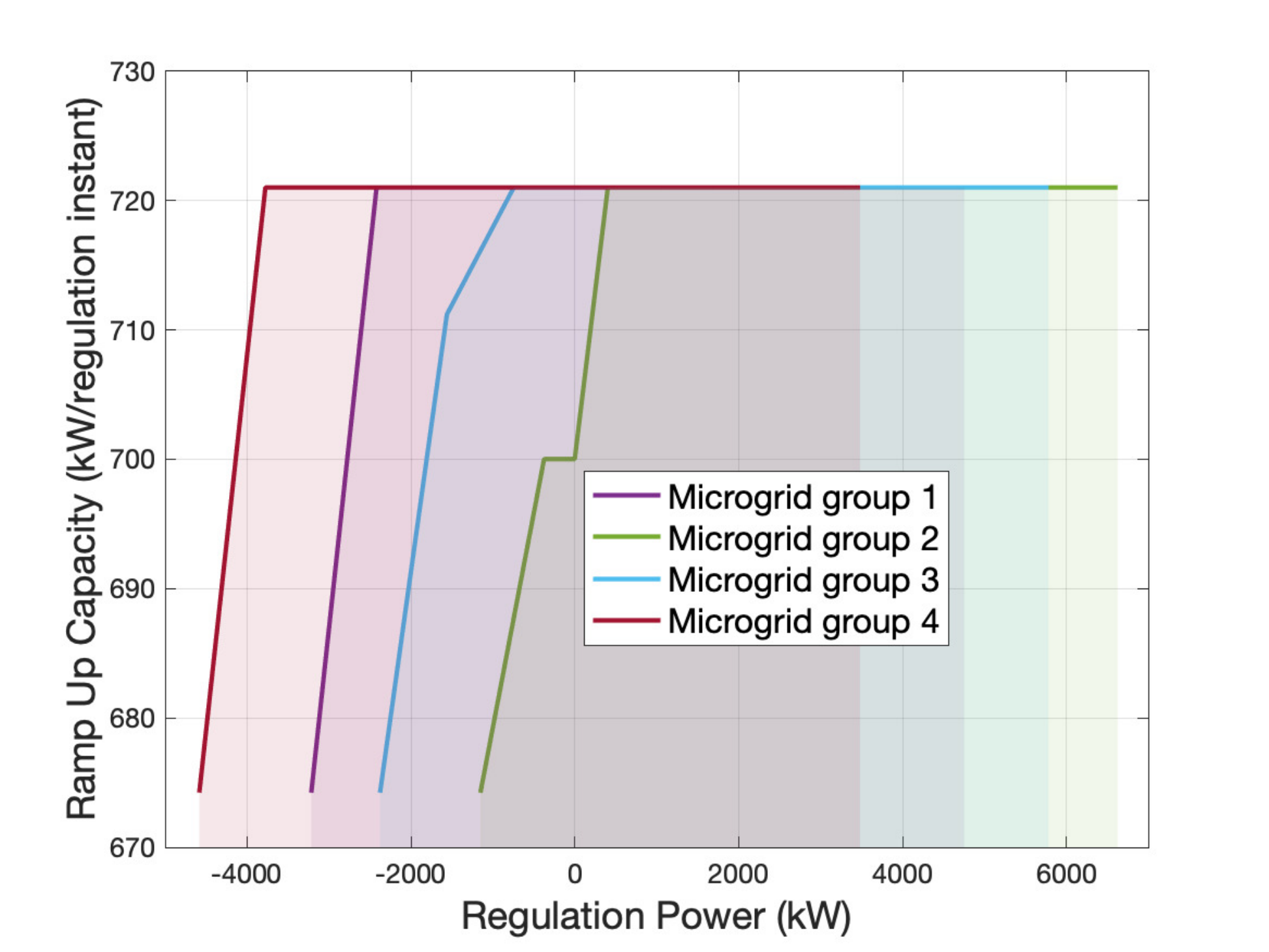}
  \caption{Ramp rate functions for different instantiations of the
    reduced-order UCSD microgrid with constant loads. The shaded
    regions represent the range of regulation power that the
    corresponding microgrid can provide.}\label{fig:ramp}
\end{figure}

In Remark~\ref{rem:ramp}, we discuss the
conditions under which the minimum ramp rate of the microgrid is
always non-zero.

\begin{remark}\longthmtitle{Non-zero minimum ramp rate}\label{rem:ramp}
   It is natural to argue that the microgrid could have a zero minimum
  ramp rate. Here, we discuss conditions under which the minimum ramp
  rate of the microgrid is non-zero. Let $ \E' = \{ e_j \in \E \;
  \vert \; \omega_j < \overline{\omega}_j \}$ be the set of all the
  lines which have not reached their flow limits when providing the
  maximum up regulation. Next, consider the graph $\G_m'=(\V,\E')$ and
  let $ \V'_g =\{ v_i \in \V_g \; \vert \; \exists \text{ a path from
  } i \text{ to } 1 \text{ in } \G_m \}$, i.e., the set of
  controllable nodes which are connected to the tie line. If $\V'_g
  \neq \phi$, then the minimum ramp rate is always non-zero. \hfill
  $\bullet$
\end{remark}

\subsection{Cost Function}\label{sec:cost}
Each aggregator needs to calculate the cost of providing a given
amount of regulation by capturing the effect of operating the
controllable nodes away from their baseline operating points.  
For an operating point~$g$, the total cost for the aggregator is given
by
\begin{equation}\label{eq:ag}
  h(g) = \sum_{p\in\V_g}h_{p}(g_{p}),
 \end{equation}
where $h_{p}:\real \to \real_{\ge 0}$ is the cost of operating node
$p$ away from its baseline level $g_{p}^0$. One representative example
of such a function is $h_p(g_p) = (g_p-g_p^0)^2$. The total regulation
that the aggregator provides is the combination of individual
regulations of controllable nodes. Therefore, for a specified
regulation level $x$, one would ideally choose the value of $g$ that
minimizes the total cost given by~\eqref{eq:ag} respecting the power
flow constraints in~\eqref{eq:pff} and the minimum and maximum
capacity constraints on each controllable node. Formally,
$f:[\overline{x},\underline{x}] : \real \to \real_{\ge 0}$, is given
by
\begin{equation}\label{eq:cost}
  f(x)= \begin{cases} \begin{aligned} & {\min_{g,\omega}}
      & &   h(g) \\
      & \text{s.t.}
      & & \underline{g} \leq g \leq \overline{g}\\
      &&& \begin{bmatrix}
        (P^0+ x)&
        \hspace{-1mm}g^{\top} &
        \hspace{-2mm}-l^{\top}
      \end{bmatrix}^{\top}\hspace{-2mm}=\M\omega\\
      &&&  |\omega| \leq \overline{\omega}.
    \end{aligned}
  \end{cases}
\end{equation}
However, a cost function defined like this does not take into account
the previous operating point of the microgrid and assumes that it can
transition between the optimal points corresponding to different
regulation powers arbitrarily fast. In practice, however, since the
regulation set points change every 2-4 seconds, ramp rates might limit
the change from optimal point at one time instant to the next. This
suggests that the cost of providing certain amount of regulation at
one instant also depends on the value of the regulation power at the
previous instant. Hence, we define the cost $\mathbf{f}:
[\overline{x},\underline{x}] \times [\overline{x},\underline{x}] \to
\real_{\ge 0}$, of providing regulation power $x$, if providing
regulation power $x^-$ at the previous instant, as
\begin{align}\label{eq:costnewno}
  & \min\limits_{g,\Delta g, \omega, \Delta \omega} & & h(g+ \Delta g)
  \notag
  \\
  & \quad \;\; \text{s.t.}  & & \underline{g} \leq g+\Delta g \leq
  \overline{g},\quad \Delta g \leq r
  \\
  &&&
  \begin{bmatrix}
    (P^0+ x)& \hspace{-1mm}(g+ \Delta g)^{\top} &
    \hspace{-2mm}-l^{\top}
  \end{bmatrix}^{\top}\hspace{-2mm}=\M(\omega+\Delta \omega)\nonumber
  \\
  &&&  |\omega+ \Delta \omega| \leq \overline{\omega} \nonumber
  \displaybreak[0]  
  \\
  &&& \underline{g} \leq g \leq \overline{g} \quad |\omega| \leq
  \overline{\omega} \nonumber
  \\
  &&& \begin{bmatrix} (P^0 + x^-) & g^{\top} & \hspace{-2mm}-l^{\top}
  \end{bmatrix}^{\top}=\M\omega. \nonumber 
\end{align}
Here, $(g+\Delta g,\omega+\Delta \omega)$ and $(g,\omega)$ are the
vectors of the power levels of controllable nodes and line flows when
the microgrid provides regulation power $x$ and $x^-$,
respectively. The constraints also enforce the capacity limits for the
individual controllable nodes and the flow limit constraints for both
values of regulation power, and the ramp constraints in transitioning
from $x^-$ to~$x$.  The reason to include the power flow constraints
at $x^-$ in~\eqref{eq:costnewno} is to enable the aggregator to
pre-compute the cost function independently of the regulation power it
might be asked to provide.  Otherwise, if the cost is computed at
every regulation instant, $g$ and $\omega$ providing $x^-$ would be
known, and the optimization variables would only be $\Delta g$, and
$\Delta \omega$. As such, $\mathbf{f}(x,x^-)$ is a lower bound on the
actual cost since $(g,\omega)$ are also decision variables and are
selected optimally to move to the next operating point.

The following result, whose proof is given in the Appendix, identifies
a condition that simplifies the computation of the cost function
$\mathbf{f}(x,x^-)$ defined in~\eqref{eq:costnewno}.

\begin{lemma}\longthmtitle{Simplified formulation and convexity of 
    cost function}\label{lemma:cvx}
  Given regulation powers $x^-$ and $x$, if $|x-x^-| \leq R(x^-)$,
  then $\mathbf{f}(x,x^-)= f(x)$.   
If $h$ is (strictly) convex, then $f$ is (strictly) convex.
\end{lemma}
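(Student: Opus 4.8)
The plan is to prove the two assertions separately, since they rest on different arguments. For the identity $\mathbf{f}(x,x^-)=f(x)$, I would first record the unconditional inequality $\mathbf{f}(x,x^-)\ge f(x)$. Indeed, any tuple $(g,\Delta g,\omega,\Delta\omega)$ feasible for~\eqref{eq:costnewno} produces, through $(g+\Delta g,\omega+\Delta\omega)$, a pair satisfying precisely the constraints of~\eqref{eq:cost} at the same regulation level $x$, with the identical objective $h(g+\Delta g)$. Thus the feasible set of~\eqref{eq:costnewno} maps into that of~\eqref{eq:cost} without changing the cost, so the minimum of the more constrained problem cannot be smaller.

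For the reverse inequality under the hypothesis $|x-x^-|\le R(x^-)$, I would construct a feasible point of~\eqref{eq:costnewno} whose cost equals $f(x)$. Let $g^*$ be a minimizer of~\eqref{eq:cost}, so $h(g^*)=f(x)$, and fix the terminal operating point $g+\Delta g:=g^*$; the terminal power-flow, flow-limit, and capacity constraints of~\eqref{eq:costnewno} then hold automatically because $g^*$ is feasible for~\eqref{eq:cost}. It remains to choose a starting operating point $g$ producing $x^-$ and satisfying the ramp constraint $\Delta g=g^*-g\le r$. Left-multiplying~\eqref{eq:pffa} by $\one^\top$ and using $\one^\top\M=\zero$ gives $\one^\top g=\one^\top l-P^0-x$ for every feasible operating point, so the aggregate ramp is $\one^\top\Delta g=x^--x$, of magnitude $|x-x^-|$. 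Since down-regulation ramping is unconstrained (only $\Delta g\le r$ is imposed), the binding case is up-regulation, and the definition $R(x^-)=\max_{g^*}\R(g^*)$ supplies an optimal starting point from which an aggregate ramp of magnitude up to $R(x^-)\ge|x-x^-|$ is achievable within the per-node limits $r$ and the flow constraints.

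I expect the crux of the proof to be this last step: converting the scalar ramp-budget inequality $|x-x^-|\le R(x^-)$ into the component-wise reachability condition $g^*-g\le r$ for some feasible $g$ at $x^-$. The quantity $R(x^-)$ constrains only the aggregate $\one^\top\Delta g$, whereas the ramp bound acts node by node, so closing this gap needs the structure of the optimizers. Here I would invoke the monotone dependence of the optimal dispatch on the regulation level noted after~\eqref{eq:rc}, namely that each $g^*$ is a decreasing function of $x$ when the node costs are convex, to route the transition along the optimal-dispatch path and distribute the aggregate change of size $|x-x^-|$ within the individual ramp limits, landing exactly on $g^*$.

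For the convexity claim, I would read~\eqref{eq:cost} as a partial minimization of a convex function over a convex parametric set. The joint set $\{(x,g,\omega):\underline g\le g\le\overline g,\ [(P^0+x)\ g^\top\ -l^\top]^\top=\M\omega,\ |\omega|\le\overline\omega\}$ is convex, since every constraint is affine in $(x,g,\omega)$, and the objective $h(g)$ is convex and independent of $(x,\omega)$; therefore $f(x)=\min_{g,\omega}h(g)$ over this set is convex in $x$. For strict convexity, take $x_1\ne x_2$ with minimizers $(g_1,\omega_1)$ and $(g_2,\omega_2)$ and $\lambda\in(0,1)$: the convex combination is feasible at $x_\lambda=\lambda x_1+(1-\lambda)x_2$, so $f(x_\lambda)\le h(\lambda g_1+(1-\lambda)g_2)$. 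Because $\one^\top g_i=\one^\top l-P^0-x_i$ forces $g_1\ne g_2$ whenever $x_1\ne x_2$, strict convexity of $h$ yields $h(\lambda g_1+(1-\lambda)g_2)<\lambda f(x_1)+(1-\lambda)f(x_2)$, giving strict convexity of $f$.
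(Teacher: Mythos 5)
Your proof is sound in overall strategy and, in two places, more careful than the paper's own argument. For the convexity claim, the paper proves it exactly as you do in spirit --- take near-minimizers $g_1\in C(x_1)$, $g_2\in C(x_2)$, observe that $\lambda g_1+(1-\lambda)g_2\in C(\lambda x_1+(1-\lambda)x_2)$ because all constraints are affine, and pass to the limit --- so your "partial minimization over a jointly convex set" phrasing is just a cleaner packaging of the same idea. Where you genuinely add something is the strict-convexity case: the paper asserts that the inequality $h(\lambda g_1+(1-\lambda)g_2)\le\lambda h(g_1)+(1-\lambda)h(g_2)$ "would be strict," which requires $g_1\neq g_2$; your observation that $\one^\top\M=\zero$ forces $\one^\top g_i=\one^\top l-P^0-x_i$, hence $g_1\neq g_2$ whenever $x_1\neq x_2$, is exactly the missing justification. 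Your unconditional inequality $\mathbf{f}(x,x^-)\ge f(x)$, obtained by projecting feasible points of~\eqref{eq:costnewno} onto feasible points of~\eqref{eq:cost}, is also a useful structural observation the paper does not state.

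The one place your argument remains incomplete is the step you yourself flag: passing from the scalar budget $|x-x^-|\le R(x^-)$ to the existence of a feasible $g$ at $x^-$ with $g^*-g\le r$ componentwise and all flow limits respected along the way. The hypothesis controls only the achievable aggregate $\one^\top\Delta g$ from \emph{some} optimal dispatch at $x^-$, whereas you must land on the \emph{specific} terminal point $g^*$; the monotonicity of the optimal dispatch in $x$ is plausible machinery for closing this, but as written it is a sketch, not a proof. You should be aware, however, that the paper's own proof of this part is no stronger --- it simply states that if the difference is within the ramp rate "it is clear that the microgrid would be able to provide the required regulation power optimally." So you have not introduced a gap that the paper avoids; you have made explicit a gap the paper glosses over. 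If you want a complete argument, the cleanest route is probably to define $R$ directly as the largest $\delta$ such that the optimal dispatch at every $x'$ with $|x'-x^-|\le\delta$ is reachable from an optimal dispatch at $x^-$ under $\Delta g\le r$ and the flow constraints, which makes the first claim true by construction.
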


Figure~\ref{fig:cost} provides the cost functions~\eqref{eq:cost} of
the four groups of microgrids displayed in Figure~\ref{fig:capacity}
in the constant load case.

\begin{figure}[htb]
  \centering
  \includegraphics[scale=0.45]{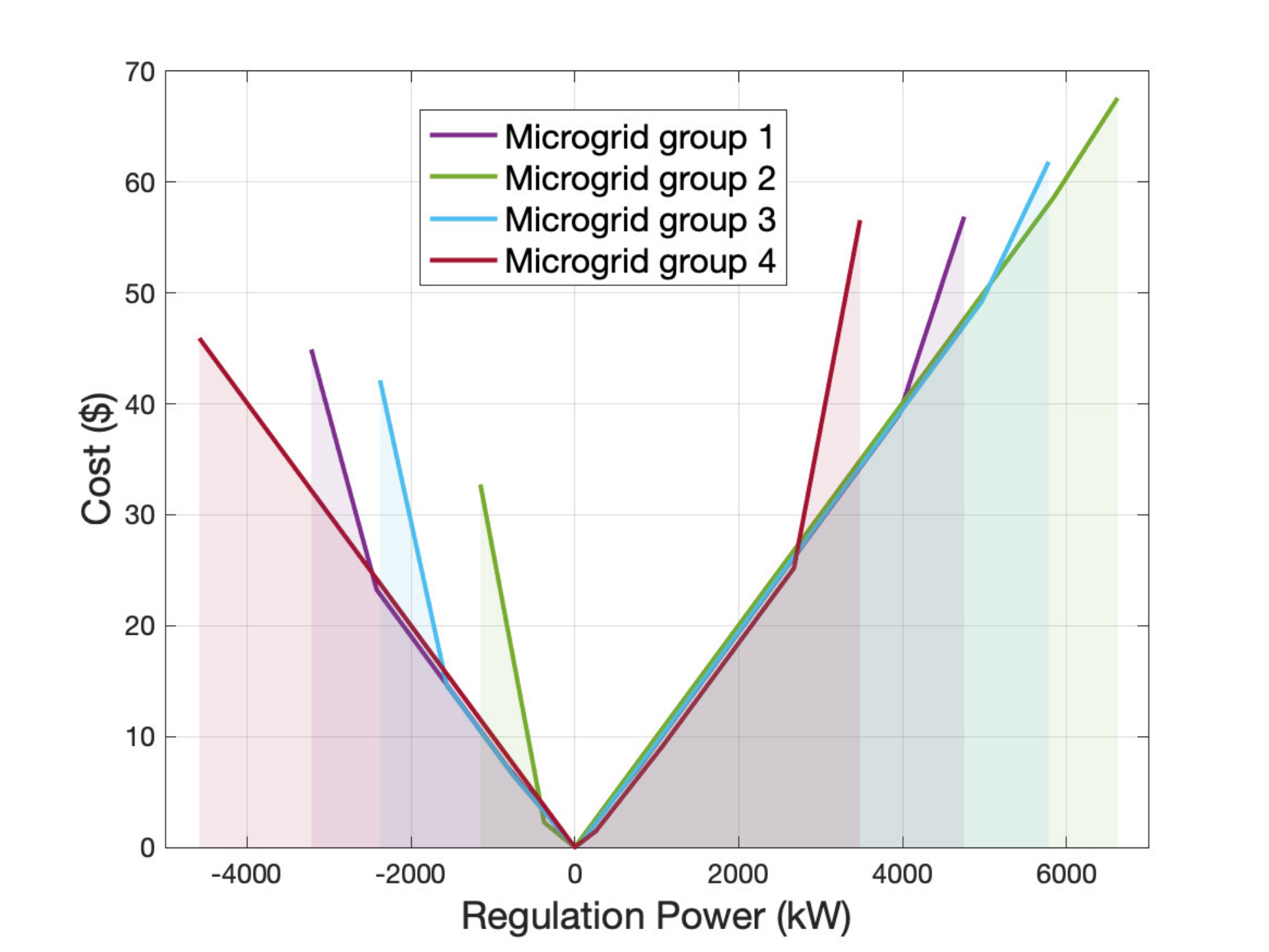}
  \caption{Abstracted cost functions for different instantiations of
    the reduced-order UCSD microgrid with constant loads. The shaded
    regions represent the range of regulation power that the
    corresponding microgrid can provide.}
 \label{fig:cost}
\end{figure}

Note that the cost function~\eqref{eq:cost} assumes the load to be
constant, but since the aggregator is not required to submit its cost
functions in [CP1], there is no need to pre-compute this using
probabilistic techniques.  Instead, the cost function at a given
regulation instant could be computed online using the load at that
instant.  The time taken to compute the cost function at a given
instant would depend upon the type of solver used, but is usually
small (e.g., less than a second with built in MATLAB solver
\verb|fmincon|). In addition,
% It should also be noticed that although the cost function could be
% computed in a short span of time, but
since the regulation period lasts for 10-15 minutes, the variation in
load would be limited, thereby requiring the recomputation of the cost
function sparingly.

\subsection{Bids for Participation in Market Clearance}
Based on the abstractions in Sections~\ref{sec:bounds}-\ref{sec:cost},
here we specify the bid information used by each aggregator to
participate in [CP1].  Without loss of generality, we specify the bid
quantities for up regulation market.  Let $g^{\text{up}} \in
\real^{n_g}$ denote the component in $g$ of the solution
of~\eqref{eq:upreg}.  

\begin{table}[h]
  \centering
  \caption{Bidding quantities for up regulation market}\label{tab}
  \begin{tabular}{|l|l|}
    \hline
    Bid Quantity   & Value
    \\ \hline \hline 
    Capacity       & $|\overline{x}|$ 
    \\ \hline
    Mileage        & $k  \R(g^{\text{up}})$
    \\ \hline 
    Capacity price & $h(g^{\text{up}})/|\overline{x}|$
    \\ \hline 
  \end{tabular}
\end{table}

Table~\ref{tab} specifies the proposed values for the bidding
quantities.  Here $k > 0$ is a constant depending on the duration of
the regulation period and update frequency of the AGC setpoints.
The suggested bids are conservative, meaning that the aggregator would
be able to provide whatever it promises, and there is no strategy to
maximize profit.  It might seem from Table~\ref{tab} that there is no
need to compute beforehand the whole ramp rate function $\R$ in
Section~\ref{sec:ramp}. However, a risk taking aggregator might use a
higher value of mileage bid based on the shape of~$\R$.  It is also
interesting to note that, from the convexity of cost function in
Lemma~\ref{lemma:cvx} and the capacity price bid in Table~\ref{tab},
the aggregator would never be at loss regardless of the regulation
power being provided.

\section{RTO-DERP Coordination Problem}\label{sec:rto-derp}
Here we describe our algorithmic solution for the RTO-DERP
coordination problem [P2] to disaggregate the regulation
signal. Equipped with the microgrids' capacities and cost and ramp
rate functions identified in Section~\ref{sec:abstractions}, the
aggregators, communicating over a graph $\G$, seek to solve, at each
instant of the regulation period, the optimization
problem~\eqref{eq:iso}. However, as we have noted before, this problem
might not always be feasible due to the presence of ramp constraints.
This means that in principle, at each regulation instant, one would
need to solve~\eqref{eq:iso} if it is feasible or minimize the
difference between the required regulation and the procured regulation
if it is infeasible. Such dichotomy also raises the issue of the
necessary information available to the aggregators to determine which
one of the two cases to address at each regulation instant.

Instead, we propose to reformulate the optimization problem in a way
that lends itself to the identification of solutions that minimize the
error between the procured regulation and the required regulation
whenever \eqref{eq:iso} is not feasible.
Without loss of generality, throughout this section we assume the
required regulation power to be positive. 
We start by defining the problem
\begin{equation}\label{eq:isonew}
  \begin{aligned}
    & {\min_{x}}
    & &  f^\mu(x)=f(x)+\mu [\Delta x ]^+ \\
    & \text{s.t.} & & \underline{x_{\im}} \leq x_{\im} \leq
    \overline{x_{\im}} \quad \forall \im
    \\
    &&& |x_{\im} - x_{\im}^-| \leq R_{\im}(x_{\im}^-) \quad \forall \im,
  \end{aligned}
\end{equation}
where $\mu > 0$ is a penalty parameter and $\Delta x=x_r - \one^{\top}
x$.  The following result, whose proof is given in the Appendix,
characterizes the equivalence between problems~\eqref{eq:isonew}
and~\eqref{eq:iso}.

\begin{lemma}\longthmtitle{Equivalence between~\eqref{eq:iso}
    and~\eqref{eq:isonew}}\label{lemma:equivalence}
  Optimization~\eqref{eq:isonew} is always feasible and there exists
  $\hat{\mu} < \infty$ such that for all $\mu \in [\hat{\mu},
  \infty)$, ~\eqref{eq:iso} and~\eqref{eq:isonew} have the same
  solution set if~\eqref{eq:iso} is feasible.
\end{lemma}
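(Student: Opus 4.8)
The plan is to separate the two assertions and to treat the equivalence as a one-sided exact-penalty statement for a convex program. First I would collect the constraints common to~\eqref{eq:iso} and~\eqref{eq:isonew} into the set
\[
  C = \setdef{x \in \real^N}{\underline{x_{\im}} \le x_{\im} \le \overline{x_{\im}},\ |x_{\im}-x_{\im}^-| \le R_{\im}(x_{\im}^-)\ \forall \im}.
\]
Since $R_{\im}(x_{\im}^-)$ is a fixed nonnegative number, each constraint is an interval in $x_{\im}$, so $C$ is a product of closed bounded intervals, hence nonempty, compact, and convex. It is nonempty because the previous operating point $x^-$ belongs to it: $x^-$ was a feasible regulation at the prior instant, so it satisfies the (period-constant) capacity bounds, and $|x_{\im}^- - x_{\im}^-| = 0 \le R_{\im}(x_{\im}^-)$. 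The objective of~\eqref{eq:isonew} is continuous and $C$ is nonempty and compact, so a minimizer exists; this proves that~\eqref{eq:isonew} is always feasible.

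For the equivalence I would exploit that $f$ is convex (indeed strictly convex, by Lemma~\ref{lemma:cvx} and the strict convexity of the representative node costs), that $C$ is convex, and that $\one^\top x = x_r$ is affine, so~\eqref{eq:iso} is a convex program. Introduce the optimal-value function
\[
  \phi(c) = \min_{x \in C,\ \one^\top x = c} f(x),
\]
defined for $c$ in the interval $\{\one^\top x : x \in C\} = [\underline{c},\overline{c}]$. Standard parametric-convexity arguments show $\phi$ is convex and continuous, and feasibility of~\eqref{eq:iso} is exactly $x_r \in [\underline{c},\overline{c}]$, with optimal value $\phi(x_r)$. The crux of the proof is the monotonicity claim that $\phi$ is strictly increasing on $[x_r,\overline{c}]$, equivalently that the minimizer $c^0$ of $\phi$ satisfies $c^0 \le x_r$: this holds because $x_r > 0$ and each node cost is minimized at its baseline, so the cheapest feasible point never provides more total regulation than required and any over-provision ($\one^\top x > x_r$) strictly raises the cost. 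I expect this monotonicity --- i.e.\ the nonnegativity of the Lagrange multiplier of the balance constraint, which is precisely what makes the \emph{one-sided} penalty $\mu[\Delta x]^+$ suffice --- to be the main obstacle.

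Granting the monotonicity, the exact-penalty conclusion follows by slicing the penalized objective $F_\mu(x) = f(x) + \mu[x_r - \one^\top x]^+$ along level sets of $\one^\top x$. Writing $\psi(c) = \min_{x\in C,\ \one^\top x = c} F_\mu(x)$, one has $\psi(c) = \phi(c)$ for $c \ge x_r$ and $\psi(c) = \phi(c) + \mu(x_r - c)$ for $c < x_r$, so that $\min_{x\in C} F_\mu = \min_c \psi(c)$. I would take $\hat{\mu}$ to be a subgradient of $\phi$ at $x_r$, which is finite by convexity of $\phi$ and compactness of $C$. For $c > x_r$, strict monotonicity gives $\psi(c) = \phi(c) > \phi(x_r)$; for $c < x_r$, the subgradient inequality $\phi(c) \ge \phi(x_r) + \hat{\mu}(c - x_r)$ yields $\psi(c) \ge \phi(x_r) + (\mu - \hat{\mu})(x_r - c) > \phi(x_r)$ whenever $\mu > \hat{\mu}$. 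Hence for every $\mu \ge \hat{\mu}$ (enlarging $\hat{\mu}$ by an arbitrary positive constant to absorb the boundary case) the minimum of $F_\mu$ over $C$ equals $\phi(x_r)$ and is attained exactly on $\{x \in C : \one^\top x = x_r,\ f(x) = \phi(x_r)\}$, which is the solution set of~\eqref{eq:iso}. Conversely, any solution of~\eqref{eq:iso} has $\Delta x = 0$, so $F_\mu$ attains its minimum value $\phi(x_r)$ there; thus the two solution sets coincide, completing the proof.
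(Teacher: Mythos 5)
Your feasibility argument is the same as the paper's (both exhibit $x^-$ as a feasible point of~\eqref{eq:isonew}). For the equivalence you take a genuinely different route: the paper first replaces the equality $\one^\top x = x_r$ by the one-sided inequality $x_r \le \one^\top x$ to form the intermediate problem~\eqref{eq:iso1}, asserts without proof that the two coincide when feasible, and then invokes an exact-penalty proposition of Bertsekas with $\hat\mu$ bounded by the $\infty$-norm of a Lagrange multiplier; you instead give a self-contained argument by slicing along level sets of $\one^\top x$ and taking $\hat\mu$ to be a subgradient of the value function $\phi$ at $x_r$. The slicing mechanics are sound, and you have correctly isolated the single claim on which everything rests: that $\phi$ does not decrease to the right of $x_r$ (equivalently, that the balance multiplier is nonnegative), which is precisely what a one-sided penalty can and cannot deliver.

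That claim is where the gap is, and your justification of it does not hold. You argue that since each node cost is minimized at its baseline, ``the cheapest feasible point never provides more total regulation than required.'' This ignores the ramp constraints: the feasible interval for aggregator $\im$ is $[\max\{\underline{x_{\im}}, x_{\im}^- - R_{\im}(x_{\im}^-)\},\, \min\{\overline{x_{\im}}, x_{\im}^- + R_{\im}(x_{\im}^-)\}]$, and when $x_{\im}^- > R_{\im}(x_{\im}^-)$ this interval lies strictly to the right of zero, so the minimizer of $f$ over $C$ is pinned at $x_{\im}^- - R_{\im}(x_{\im}^-) > 0$ rather than at the baseline. If the previous total regulation was well above the new $x_r$ and the ramp-down limits bind for the heavily loaded aggregators while others have slack, the componentwise-cheapest point of $C$ can satisfy $\one^\top x > x_r$ even though~\eqref{eq:iso} is feasible (the slack aggregators could go negative to restore balance, at extra cost). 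Concretely, take $N=2$, $x_r=1$, $C_1=[2,3]$ (from $x_1^-=2.5$, $R_1=0.5$) and $C_2=[-1.5,0.5]$: problem~\eqref{eq:iso} is feasible with solution $(2,-1)$, yet $(2,0)$ minimizes $f$ over $C$, incurs zero penalty, and hence solves~\eqref{eq:isonew} for \emph{every} $\mu$; here $\phi$ is decreasing at $x_r$ and no finite penalty makes the one-sided reformulation exact. The paper's own proof buries the identical issue in the unproved assertion that~\eqref{eq:iso} and~\eqref{eq:iso1} share solutions whenever feasible, so you have located the real obstruction --- but you have not overcome it, and closing it requires an additional hypothesis (e.g., that every aggregator can ramp back to its baseline within one step, or more generally that the minimizer of $f$ over $C$ satisfies $\one^\top x \le x_r$).
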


\begin{remark}\longthmtitle{Establishing the threshold value
    $\hat{\mu}$ without the knowledge of dual optimizers}
  The threshold value $\hat{\mu}$ in Lemma~\ref{lemma:equivalence}
  depends on the optimal values of the dual variables, which is not
  known beforehand.  Interestingly, the explicit knowledge of the
  Lagrange multipliers to obtain a lower bound on the value of $\mu$
  can be avoided. In fact, according to~\cite[Proposition
  5.2]{AC-JC:15-tcns}, we have
  \begin{align} 
    \hat{\mu} \ge 2 \max\limits_{x \in \mathcal{F}}\|\nabla f
    (x)\|_\infty . \tag*{$\bullet$}
  \end{align}
 \end{remark}

Given Lemma~\ref{lemma:equivalence}, we focus on solving
problem~\eqref{eq:isonew} in a distributed way.  To handle the local
constraints, we again reformulate~\eqref{eq:isonew} using exact
penalty function as
\begin{align}\label{eq:isonewlocal}
  \min\limits_{x} \;f^p(x)=& \underbrace{ f(x) + \mu_2
    \sum\limits_{\im=1}^N \big( [\overline{b_{\im}}]^+ +
    [\underline{b_{\im}}]^+ \big)}_{f^{\mu_2}(x)}+ \mu [\Delta x]^+,
  \\
  \text{where } \overline{b_{\im}} =& x_{\im} -
  \min\{\overline{x_{\im}},x_{\im}^- + R_{\im} (x_{\im}^-) \}, \notag
  \\
  \text{and }\underline{b_{\im}} =&
  \max\{\underline{x_{\im}},x_{\im}^- - R_{\im} (x_{\im}^-)\} -
  x_{\im}, \notag
\end{align}
are the box constraints taking care of the capacity and ramp rate for
aggregator $\im \in \{1,\ldots,N\}$ and $\mu_2 > 0$ is again a penalty
parameter.  Once again, similar to Lemma~\ref{lemma:equivalence},
there exist finite values of $\mu_2$ for which the
reformulation~\eqref{eq:isonewlocal} is exact.

Since problem~\eqref{eq:isonewlocal} is unconstrained, consider the
dynamics
\begin{equation}\label{eq:gradientdescent}
  \dot{x} \in -\partial f^p(x),
\end{equation}
where $\partial f^p : \real^N \rightrightarrows \real^N$ denotes the
generalized gradient of $f^p$.  For each agent $\im \in
\{1,\ldots,N\}$, $[\partial f^p(x)]_{\im}$ is given by
\begin{align*}
  \begin{cases}
    \nabla f_{\im}(x_{\im}) - [\mu]^+_{\Delta x} -
    [\mu_2]^+_{\overline{b_i}} + [\mu_2]^+_{\underline{b_i}}, &\Delta
    x, \overline{b_{\im}},\underline{b_{\im}} \neq 0,
    \\
    \nabla f_{\im}(x_{\im}) - [0, \mu]- [\mu_2]^+_{\overline{b_i}} +
    [\mu_2]^+_{\underline{b_i}} , &\Delta x = 0,
    \overline{b_{\im}},\underline{b_{\im}} \neq 0,
    \\
    \nabla f_{\im}(x_{\im}) - [\mu]^+_{\Delta x} - [0,\mu_2] +
    [\mu_2]^+_{\underline{b_i}}, &\Delta x, \underline{b_{\im}} \neq
    0,\overline{b_{\im}} = 0,
    \\
    \nabla f_{\im}(x_{\im}) - [\mu]^+_{\Delta x} -
    [\mu_2]^+_{\overline{b_{\im}}} + [0,\mu_2], &\Delta x,
    \overline{b_{\im}} \neq 0,\underline{b_{\im}} = 0,
    \\
    \nabla f_{\im}(x_{\im}) - [0, \mu]- [0,\mu_2] +
    [\mu_2]^+_{\underline{b_i}} , &\Delta x, \overline{b_{\im}}
    =0,\underline{b_{\im}} \neq 0,
    \\
    \nabla f_{\im}(x_{\im}) - [0, \mu]- [\mu_2]^+_{\overline{b_i}} +
    [0,\mu_2] , &\Delta x , \underline{b_{\im}}=0,\overline{b_{\im}}
    \neq 0.
  \end{cases}
\end{align*} 
The equilibria of the dynamics~\eqref{eq:gradientdescent} satisfy
$\zero \in \partial f^p(x)$.  Asymptotic convergence
of~\eqref{eq:gradientdescent} to the optimizers
of~\eqref{eq:isonewlocal} could be easily established using tools from
non-smooth analysis, cf.~\cite[Proposition 14]{JC:08-csm}.
However, the implementation of~\eqref{eq:gradientdescent} requires
every aggregator to have the knowledge of the total regulation at all
times. To handle this, we use dynamic average consensus,
cf. Section~\ref{sec:prelims}, to estimate the average of the
difference between the required regulation and procured regulation
from all the microgrids.  Since $\frac{1}{N}\Delta x$ and $\Delta x$
have the same signs, we modify algorithm~\eqref{eq:gradientdescent}
using dynamic average consensus as follows
\begin{subequations}\label{eq:gd+dac}
  \begin{align}
    \!\! \dot{x} & \in -\partial f^{\mu_2} (x) +
    [\mu]_{z}^+ \label{eq:x},
    \\
    \!\! \dot{z}& \in -\nu z -\beta \Lap z - v + \nu (x_r e - x)
    + \partial f^{\mu_2}(x) - [\mu]^+_z,\label{eq:z}
    \\
    \!\!\dot{v}&=\nu \beta \Lap z, \label{eq:v}
  \end{align}
\end{subequations}
where $\Lap \in \real^{N \times N}$ is the Laplacian matrix of $\G$,
$z, v \in \real^N$, $z_{\im}$ is the $\im$th aggregator's estimate of
$\frac{1}{N}\Delta x$, $[\mu]^+_z \in \real^N$ with its $\im$th
element as $[\mu]^+_{z_{\im}}$ and $e$ is the unit vector with only
one entry as one and all others as zero. Note immediately that the
algorithm~\eqref{eq:gd+dac} is distributed over the communication
graph and only one aggregator needs to know the required regulation.
We refer to~\eqref{eq:gd+dac} as ``gradient descent + dynamic average
consensus" algorithm, abbreviated as $\gdac$.
The equilibria for $x$ are the points satisfying $\zero \in -\partial
f^{\mu_2}(x)+ [ \zero, \mu \one]$. The next result, whose proof is
given in the Appendix, characterizes the convergence properties of the
$\gdac$ algorithm.

\begin{theorem}\longthmtitle{Asymptotic convergence of the distributed
    dynamics to the optimizers} \label{thm:gdac}
Let $\G$ be strongly connected and weight-balanced, and the initial conditions satisfy $\one^{\top} v(0)=0$ and
  $\one^{\top} z(0) - \Delta x(0)=0$, then there exists $\bar{\mu} <
  \infty$ such that the dynamics~$\gdac$ find the optimizers of~\eqref{eq:isonewlocal} for all $\mu \in [\bar{\mu},
  \infty)$.
\end{theorem}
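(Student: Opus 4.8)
The plan is to exploit the interpretation of the $\gdac$ dynamics~\eqref{eq:gd+dac} as the subgradient flow $\dot{x} \in -\partial f^p(x)$ on the penalized objective~\eqref{eq:isonewlocal}, in which each aggregator replaces the globally unavailable quantity $[\mu]^+_{\Delta x}$ by its local estimate $[\mu]^+_{z_{\im}}$, while the $(z,v)$-block runs dynamic average consensus so that $z_{\im}$ tracks $\frac{1}{N}\Delta x$. I would first establish two conserved quantities. Since $\G$ is weight-balanced, $\one^\top \Lap = \zero$, so from~\eqref{eq:v} one gets $\frac{d}{dt}(\one^\top v) = \nu\beta\,\one^\top \Lap z = 0$, and the initialization $\one^\top v(0)=0$ yields $\one^\top v(t)=0$ for all $t$. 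Summing the right-hand sides of~\eqref{eq:x} and~\eqref{eq:z}, using $\one^\top\Lap z = 0$ and $\dot{\Delta x} = -\one^\top\dot{x}$, I would show that $w := \one^\top z - \Delta x$ obeys $\dot{w} = -\nu w$; the second initialization $\one^\top z(0)-\Delta x(0)=0$ then forces $\one^\top z(t) = \Delta x(t)$, i.e. the running average of $z$ equals $\frac{1}{N}\Delta x$ along the entire trajectory, not merely in steady state.

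Next I would characterize the equilibria on this invariant set. At an equilibrium, $\dot{v}=\zero$ gives $\nu\beta \Lap z = \zero$; since $\G$ is strongly connected, $\Lap z = \zero$ forces $z = c\one$, and the invariant $\one^\top z = \Delta x$ fixes $c = \frac{1}{N}\Delta x$, so $z$ is at consensus at exactly the value it is meant to track. Because $\frac{1}{N}\Delta x$ and $\Delta x$ share the same sign, $[\mu]^+_{z_{\im}}$ collapses to the common value $[\mu]^+_{\Delta x}$ (or the interval $[0,\mu]$ when $\Delta x=0$), so the remaining equilibrium condition $\zero \in -\partial f^{\mu_2}(x)+[\zero,\mu\one]$ is precisely the optimality condition $\zero \in \partial f^p(x)$ for~\eqref{eq:isonewlocal}. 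Since $f^{\mu_2}$, and hence $f^p = f^{\mu_2} + \mu[\Delta x]^+$, is convex, every such critical point is a global minimizer of~\eqref{eq:isonewlocal}.

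The core of the argument is convergence to this equilibrium set, which I would carry out with a nonsmooth LaSalle invariance principle for differential inclusions, cf.~\cite[Proposition 14]{JC:08-csm}. I would propose a Lyapunov function combining the optimality gap of $f^p$ in the $x$-variable with quadratic penalties on the estimation and disagreement errors $z - \frac{1}{N}\Delta x\,\one$ and $v - v^*$, mirroring the certificate underlying the dynamic average consensus guarantee of~\cite[Theorem 4.1]{SSK-JC-SM:15-ijrnc}. The decrease computation relies on the established invariants and on weight-balancedness, so that $\one^\top\Lap = \zero$ makes the symmetric part of $\Lap$ positive semidefinite and the $\Lap$-dependent terms sign-definite, while the cross terms coupling the subgradient block to the consensus block cancel along the constraint $\one^\top z = \Delta x$. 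LaSalle then confines trajectories to the largest invariant set on which the Lyapunov derivative vanishes, which by the previous step is exactly the set of equilibria, i.e. the optimizers of~\eqref{eq:isonewlocal}. I expect this Lyapunov construction and the accompanying nonsmooth bookkeeping to be the main obstacle: the set-valued terms $\partial f^{\mu_2}$ and $[\mu]^+_z$ force one to work with the Clarke generalized gradient throughout, and the coupling terms between the gradient-descent and consensus parts must be shown to be dominated, which is where a sufficiently large $\bar{\mu}$ enters.

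Finally, I would tie the value of $\bar{\mu}$ to exactness: as in Lemma~\ref{lemma:equivalence} and the exactness of the $\mu_2$-penalty noted after~\eqref{eq:isonewlocal}, a finite $\bar{\mu}$ guarantees that the critical points identified above coincide with the desired minimizers, so that for all $\mu \in [\bar{\mu},\infty)$ the $x$-component of every trajectory of $\gdac$ converges to an optimizer of~\eqref{eq:isonewlocal}.
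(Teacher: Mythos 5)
Your opening steps coincide with the paper's: weight-balancedness gives $\tfrac{d}{dt}(\one^\top v)=0$, the quantity $w=\one^\top z-\Delta x$ obeys $\dot w=-\nu w$ and hence vanishes identically under the stated initialization, and at any equilibrium strong connectivity forces $z=\tfrac{\Delta x}{N}\one$, so the $x$-equilibria are exactly the critical (hence, by convexity, optimal) points of~\eqref{eq:isonewlocal}. All of this is correct and is precisely how the paper begins; the intended finish via a nonsmooth LaSalle argument is also the paper's route.

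The gap is in the convergence argument itself, which is the heart of the theorem and which you leave as a sketch whose ingredients would not work as stated. A Lyapunov function with ``quadratic penalties on $z-\tfrac{1}{N}\Delta x\,\one$ and $v-v^{\ast}$'' faces two obstacles: $v^{\ast}$ is not available a priori, and the error $z-\tfrac{1}{N}\Delta x\,\one$ has no autonomous decay because $\Delta x$ evolves with $x$, so differentiating a quadratic in it produces cross terms with $\dot x$ that do not cancel. The paper resolves both issues with a specific coordinate change $\eta=\nu(z-(x_r e-x))+v$, under which $\dot\eta=-\nu\eta$ decays autonomously, and with the nonsmooth candidate $V=f^{\mu}(x)+\mu\sum_{\im}[z_{\im}]^{+}+\tfrac{1}{2}\|\eta\|^{2}$; the term $\mu\sum_{\im}[z_{\im}]^{+}$ is a replica of the penalty evaluated at the local estimates, not a quadratic disagreement, and it is exactly what makes the cross terms between the $x$-flow and the $z$-flow recombine into $-\|\nabla f\|^{2}$ plus terms linear in $\mu\,\nabla f$ that are dominated by $-N_p\mu^{2}$ ($N_p$ the number of positive entries of $z$). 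The threshold $\bar\mu$ then comes out of a case analysis over the sign patterns of $(\Delta x,z)$ --- with the invariant $\one^\top z\equiv\Delta x$ ruling out the case $\Delta x>0$, $z<\zero$ --- by taking $\mu$ large enough for the worst case $N_p=1$; it is not obtained from penalty exactness alone, as your closing paragraph suggests. Without this (or an equivalent) certificate the LaSalle step has nothing to apply to, so the proof is incomplete at its central point.
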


\begin{remark}\longthmtitle{Initialization of the distributed
    algorithm}
  For the dynamics~$\gdac$ to converge to the optimizers,
  Theorem~\ref{thm:gdac} specifies requirements on the initial
  conditions. The requirement $\one^{\top} v(0)=0$ could be
  implemented trivially by selecting $v(0)=\mathbf{0}$. For the
  implementation of $\one^{\top} z(0) - \Delta x(0) = 0$, the
  aggregators can simply choose $z(0)=\zero$ and $x_{\im}(0)=0$ for
  all $\im$, except for the aggregator who knows the required
  regulation for which $x_{\im}(0)=x_r$. \hfill $\bullet$
\end{remark}

\section{Simulations}\label{sec:sims}

\begin{figure*}[ht]
  \centering
  \includegraphics[width=.95\textwidth]{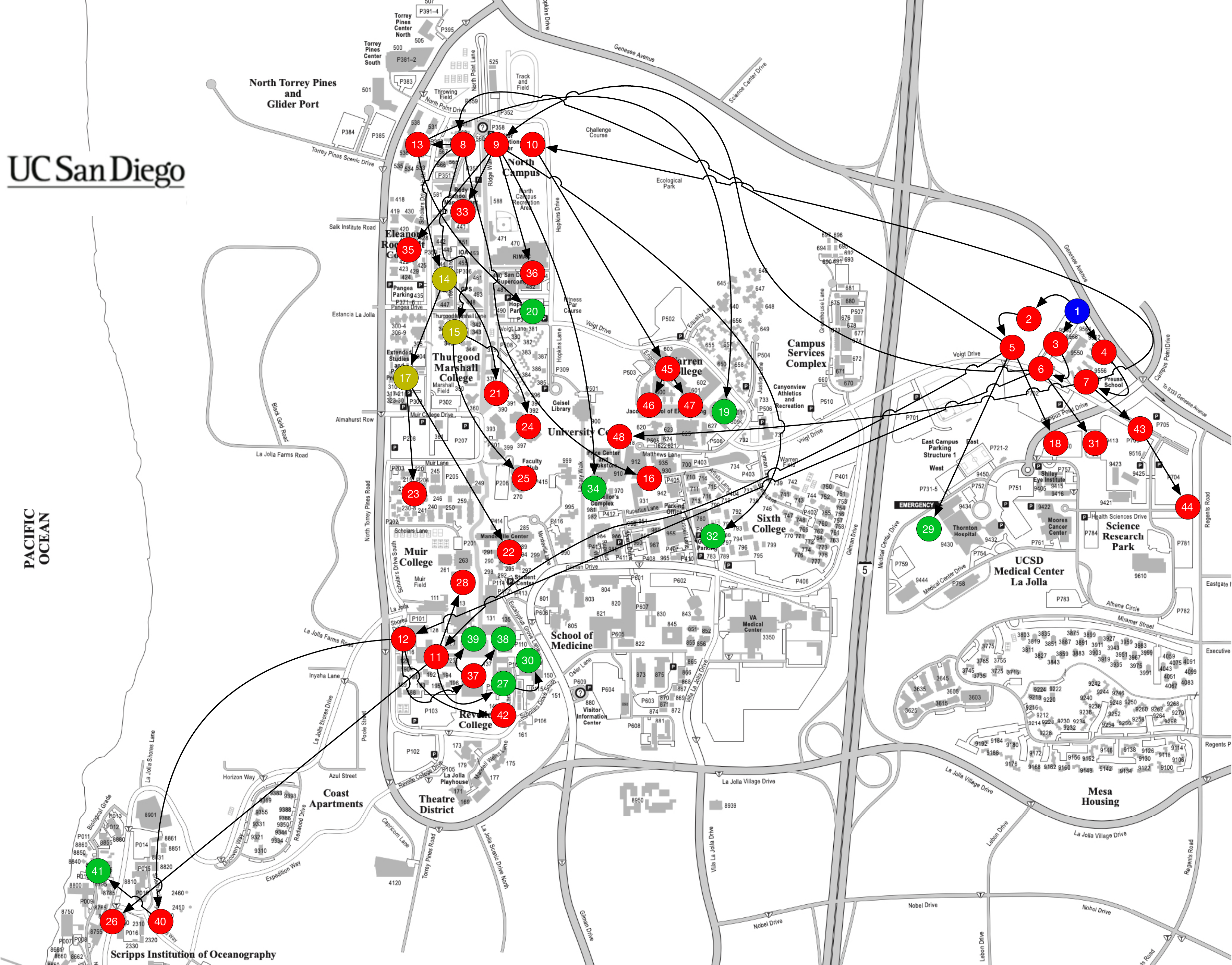}
  \caption{Reduced-order model of the UCSD microgrid.  Arrowheads
    represent the direction of positive flows.  Blue node (1) is connected to
    the tie line. Green nodes (19, 20, 27, 29, 30, 32, 34 38, 39 and
    41) represent the generators, dark yellow (14, 15 and 17) the electric
    vehicle stations and red (remaining) the building
    loads.}\label{fig:map}
  \vspace*{-1ex}
\end{figure*}

We provide here our simulation results based on the abstractions of
capacities, cost, and ramp rate developed in
Section~\ref{sec:abstractions} and the RTO-DERP coordination
algorithm~\eqref{eq:gd+dac} in Section~\ref{sec:rto-derp}. For the
purpose of simulations, we consider a reduced-order model of the
University of California, San Diego (UCSD) microgrid developed using
the distributor feeder reduction algorithm in~\cite{ZKP-VRD-MJR-JK:18}
and provided by the research group of Prof. Jan Kleissel.  Compared to
the full-order model of the UCSD
microgrid~\cite{BW-JD-DW-JK-NB-WT-CR:13} which is a radial, balanced
network with 1289 buses (3869 nodes), the reduced-order model is a
balanced tree network with 48 buses. The buses in the reduced-order
model are obtained by retaining the key buses in the full order model
which are the buses where the building loads aggregate or which have
generators.  Since the UCSD reduced-order model is balanced, we
consider only one phase in our simulations. The model consists of 10
generators (2 gas turbines, 1 steam turbine, and 7 solar PV systems)
and 37 loads (34 building loads and 3 electric vehicle stations). We
show the location of the buses on the geographical map of the campus
in Figure~\ref{fig:map}.  For our simulation, we take the UCSD
microgrid as a template, and we instantiate it using different
baseline scenarios to construct 12 different microgrids, divided into
4 groups. The baseline values of generation and mean load is constant
within a group. The 3 different scenarios within a group consist of
(a) constant load, (b) variable load with failure probabilities
$\epsilon'=10^{-1}$, $\epsilon=4.2 \times 10^{-5}$ and (c) variable
load with failure probabilities~$\epsilon'=2 \times 10^{-1}$,
$\epsilon=8.4 \times 10^{-5}$. The abstracted regulation capacities
and ramp rate functions of different microgrid groups are shown in
Figures~\ref{fig:capacity} and~\ref{fig:ramp}, resp.  For cost
functions, we consider quadratics for all the resources.  The
abstracted cost functions for different groups are shown in
Figure~\ref{fig:cost}.

We demonstrate the performance of the distributed
algorithm~\eqref{eq:gd+dac} 
in two sets of simulations.  To implement the continuous-time
algorithm, we use a first-order Euler discretization with step size of
0.001 to show its practical feasibility.  The values of $\mu$,
$\mu_2$, $\beta$ and $\nu$ are taken to be 1000, 1100, 400 and 400,
respectively.  In the first simulation,
cf. Figure~\ref{fig:one_instant}, we consider one regulation instant
and first show the evolution of the proposed algorithm for required
down regulation of 50000~kW, and compare it, for the same
communication topology (undirected ring with few additional edges),
against the (2-hop distributed) saddle-point
dynamics~\cite{AC-BG-JC:17-sicon} of the augmented Lagrangian for the
equivalent reformulated problem as per~\cite{AC-JC:16-allerton}. As
can be seen from the plots, the algorithm time required by the
proposed algorithm to reach~1\% band of the required regulation power
in much less compared to the saddle-point dynamics.
The time required does increase when the communication topology is
changed to a directed ring - which is the worst possible topology for
strongly connected graphs, but still remains less than a second,
implying that the number of iterations is less than 1000.

\begin{figure}[htb]
  \centering
  \includegraphics[scale=0.45]{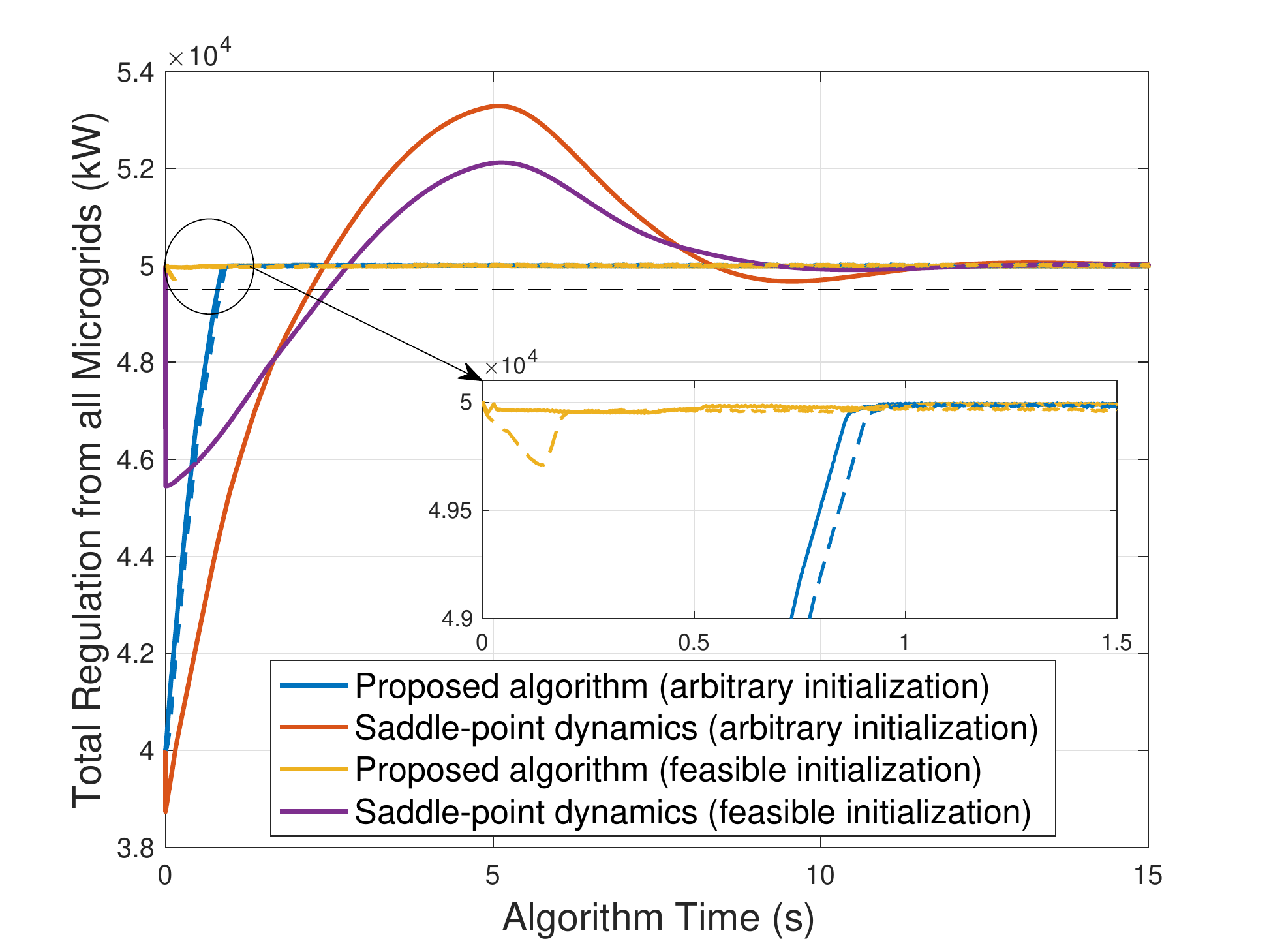}
  \caption{Performance of the proposed distributed algorithm compared
    against the saddle-point dynamics for 50000~kW down regulation
    from 12 aggregators.  The dashed lines for the proposed algorithm
    represent the algorithm evolution over a directed ring, and the
    black dashed lines represent 1\% band of the required regulation
    power. Dynamics were implemented in discrete time with a step size
    of 0.001 and the values of $\mu$, $\mu_2$, $\beta$ and $\nu$ as
    1000, 1100, 400 and 400, respectively.}\label{fig:one_instant}
  \vspace*{-1ex}
\end{figure}

For the second simulation, we consider the dynamic regulation test
signal (RegD), available on the Pennsylvania-New Jersey-Maryland
Interconnection (PJM) website~\cite{PJM}. Since the RegD signal on the
PJM website is normalized and could be scaled as long as the problem
remains feasible, we scale it by a factor of 50000 and then use our
abstractions and clear the market according to [CP1]. Once the market
is cleared, we use our algorithm to track the scaled RegD signal and
compare it using the current algorithm of disaggregating the
regulation signal described in [CP2]. For the sake of clarity, we show
only the first 100 instants of the regulation period, and instead of
contributions from each of the 12 microgrids, show the total
contributions from the 4 groups. As we can see from
Figure~\ref{fig:compare}, when it is not possible to provide the
required amount due to limits on ramp rates, both the proposed
algorithm and the current algorithm try to provide as much regulation
power as possible, and the tracking performance for both the
algorithms is similar.  But, if we compare the cost, the proposed
algorithm with a cost of \$9104 outperforms the current algorithm with
a cost of \$9638. This difference in cost comes from very different
power contributions from the microgrids for the two algorithms. The
proposed algorithm allocates the regulation signal to the microgrids
based on their abstracted cost functions (cf.~Figure~\ref{fig:cost}),
whereas current practice does not take them into account.  It can be
noticed in Figure~\ref{fig:compare} that, under current practice, if
not capped by the cleared capacities, the power allocations for
different microgrid groups have the same ratios for every regulation
instant.  For example, the shape of the regulation power curves for
microgrid groups 1, 3 and 4 are similar and only differ in terms of
scaling (by factors depending on the ratio of their procured
mileages).

\begin{figure}[htb]
  \centering
  \includegraphics[scale=0.45]{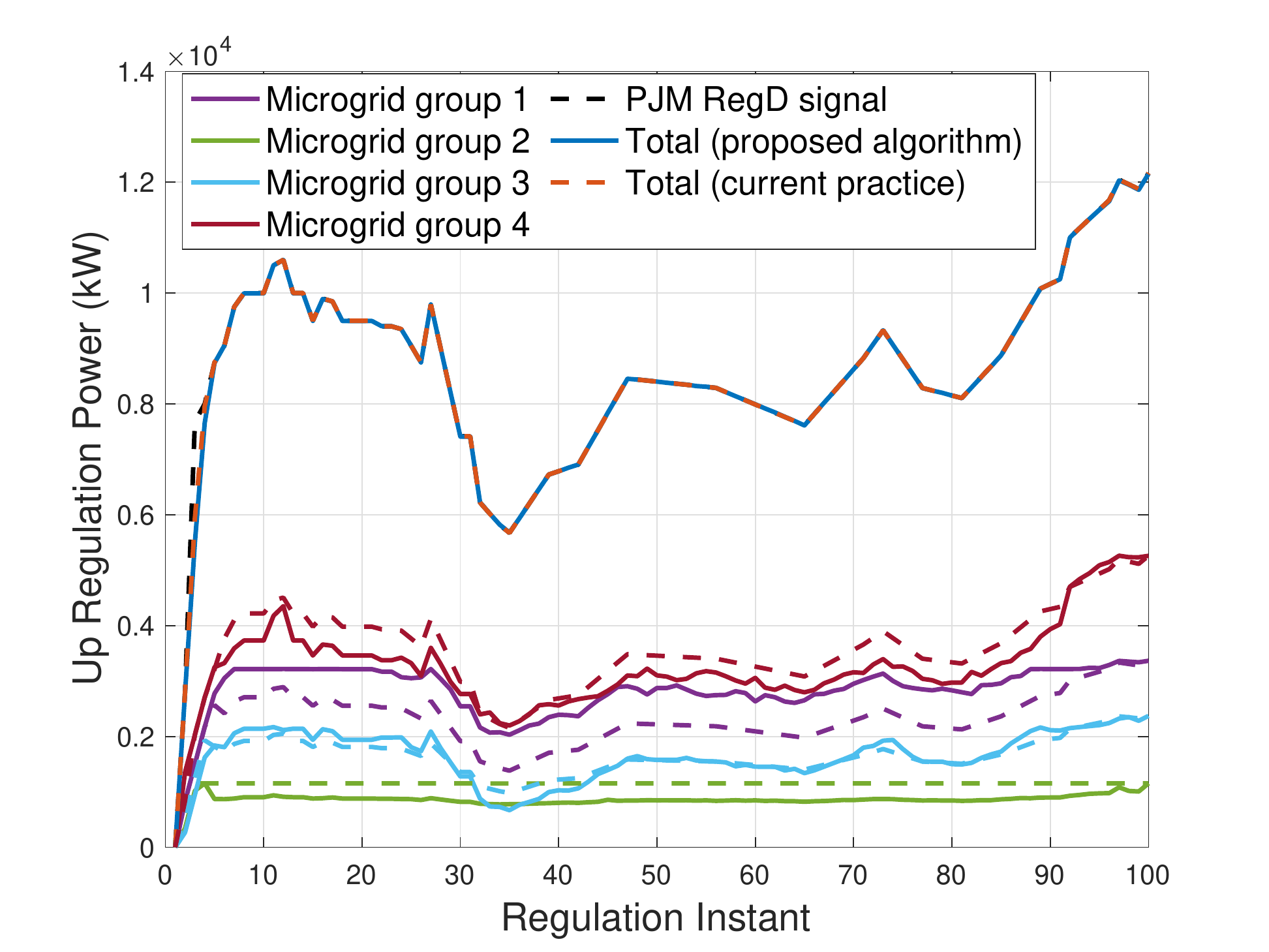}
  \caption{Performance of the proposed RTO-DERP coordination algorithm
    and the algorithm followed currently tested against first 100
    updates of the PJM RegD signal.  For microgrid groups, the solid
    lines represent the regulation power using the proposed algorithm
    and the dashed lines represent the regulation power using current
    practice.  Although the tracking performance for both the
    algorithms is similar, contributions from individual microgrids
    differ substantially resulting into different
    costs.}\label{fig:compare}
  \vspace*{-1ex}
\end{figure}

\section{Conclusions}
We have considered the problem of providing frequency regulation
services by aggregations of DERs. We have described the limitations of
current practice and identified the challenges to overcome them with
DER aggregators modeled as microgrids.  We have developed meaningful
abstractions for the capacity, cost of generation, and ramp rates by
taking into account the power flow equations inside the
microgrid. This provides enough information for the microgrids to
participate in the market clearance stage.  We have employed these
abstractions to design a provably correct distributed algorithm that
solves the RTO-DERP coordination problem to optimally disaggregate the
regulation signal when the problem is feasible and minimize the
difference between the required regulation and procured regulation
when it is infeasible. Future work will extend our work to microgrids
with more general topologies, incorporate AC power flow equations, and
investigate smooth distributed algorithms to remove any chattering due
to non-smooth dynamics.

\appendix
Here we provide proofs of all the results stated in the paper.

\subsection*{Proof of Lemma \ref{lemma:chance to linear}}
With the notation of the statement,~\eqref{eq:pseudoinv} can be
written as
\begin{align*}
  \begin{vmatrix}
    \M_1 (\mathbf{1}^\top l - \mathbf{1}^\top g) + \M_2 g - \M_3 l + \N
    \gamma
  \end{vmatrix} \leq \overline{\omega}.
\end{align*}
Without loss of generality, let us for now consider only the following
constraint in~\eqref{eq:upregchance}
\begin{align}\label{eq:positiveconstraints}
  \Pr (|\zeta_j|-\overline{\omega}_j \leq 0) \geq 1- \epsilon.
\end{align}
where $\zeta_j=(\M_{1j} \mathbf{1}^\top - \M_{3j}) l + (\M_{2j} - \M_{1j}
\mathbf{1}^\top) g + \N_j \gamma $.
Let $\xi_j^+=\setdef{\zeta_j \in \real}{\zeta_j-\overline{\omega}_j \leq 0}$ and
$\xi_j^-=\setdef{\zeta_j \in \real}{-\zeta_j-\overline{\omega}_j \leq 0}$.
Then~\eqref{eq:positiveconstraints} is equivalent to
\begin{align}\label{eq:pr_int}
  \Pr ( \xi_j^+ \cap \xi_j^- ) \geq 1- \epsilon,
\end{align}
We can further rewrite~\eqref{eq:pr_int} as
\begin{align}
\Pr (  \xi_j^+ \cap \xi_j^-)^c \leq \epsilon
\Rightarrow \Pr(\xi_j^{+c} \cup \xi_j^{-c}) \le \epsilon
. \label{eq:pr_complement}
\end{align}
We next break~\eqref{eq:pr_complement} 
down into single chance constraints. 
Using the fact that $\xi_j^{+c}$ and $\xi_j^{-c}$ are mutually exclusive, 
$\Pr(\xi_j^{+c} \cup \xi_j^{-c})=\Pr(\xi_j^{+c}) +\Pr( \xi_j^{-c})$. Therefore,~\eqref{eq:pr_complement}
is equivalent to
\begin{align}\label{eq:pr_alllines}
   \Pr (\xi_j^{+c}) \leq \epsilon/2, \text{ and } \Pr(\xi_j^{-c}) \le& \epsilon/2.
 \end{align}
If $l \sim \mathcal{N}( \hat{l}, V_l)$, then $\zeta_j \sim
\mathcal{N}(\hat{\zeta_j}, \sigma_j^2)$ where
\begin{align*}
  \hat{\zeta_j} &=(\M_{1j} \mathbf{1}^\top - \M_{3j}) \hat{l} + (\M_{2j}
  - \M_{1j} \mathbf{1}^\top) g + \N_j \gamma,
  \\
  \sigma_j^2 &= (\M_{1j} \mathbf{1}^\top - \M_{3j}) V_l (\M_{1j}
  \mathbf{1}^\top - \M_{3j})^\to.
\end{align*}
Defining $w = {(\zeta_j-\hat{\zeta_j})}/{\sigma_j}$, we have $w \sim
\mathcal{N}(0,1)$.
\begin{align}\label{eq:phi}
  \Pr (\xi_j^{+}) = \Pr \Big(w \leq \frac{\overline{\omega}_j-\hat{\zeta_j}}{\sigma_j}
  \Big) = \Phi \Big(\overline{\omega}_j-\frac{\hat{\zeta_j}}{\sigma_j} \Big).
\end{align}
Using equations~\eqref{eq:phi} and~\eqref{eq:erf}, we have
from~\eqref{eq:pr_alllines} for $\Pr(\xi_j^{+})$
\begin{align*}
  &\frac{1}{2}+\frac{1}{2} \erf \Big( \frac{\overline{\omega}_j -
    \hat{\zeta_j}}{\sqrt{2} \sigma_j} \Big) \geq 1-\epsilon/2 ,
  \\
  & \Rightarrow \erf \Big(\frac{\overline{\omega}_j -
    \hat{\zeta_j}}{\sqrt{2} \sigma_j} \Big) \geq 1-\epsilon ,
  \\
  & \Rightarrow \hat{\zeta_j} \leq \sqrt{2} \sigma_j \erf^{-1}
  (\epsilon -1) + \overline{\omega}_j.
\end{align*}
A similar inequality could be obtained from~\eqref{eq:pr_alllines} for
$\Pr(\xi_j^-)$.  As a result,~\eqref{eq:pr_alllines} could be
rewritten as
\begin{align*}
  |\hat{\zeta_j}| \leq \sqrt{2} \sigma_j \erf^{-1} (\epsilon -1) +
  \overline{\omega}_j.
\end{align*}
The righthand side of the above constraint is a constant dependant on
$\epsilon$ and the left hand side depends on the decision variables
$g$ and $\gamma$.

The same technique could be applied to the remaining set of
constraints. If we apply this to all the chance constraints
in~\eqref{eq:upregchance}, then problem~\eqref{eq:upregchance} could
be solved by solving the deterministic linear
program~\eqref{eq:chance-det}. \hfill $\blacksquare$

\begin{lemma}\longthmtitle{Simplified power flow constraints for tree
    network}\label{lemma:simplify}
  Let $\G_m$ be a tree and $\PP_{\refs} \in \real^{(n-1) \times
    (n-1)}$ denote its path matrix with first vertex as reference
  $\refs$.  Then the constraints
  \begin{subequations}\label{eq:constraints_ramp}
    \begin{align} 
      \begin{bmatrix}\label{eq:constraints_ramp:a}
        (P-\mathbf{1}^\top \Delta g) & (g+\Delta g)^\top & -l^\top
      \end{bmatrix}^\top =&\M(\omega+ \Delta \omega), 
      \\
      |\omega+\Delta \omega| \leq  & \;
      \overline{\omega},
    \end{align}
  \end{subequations}
  in~\eqref{eq:rc} could be equivalently written as
  \begin{align}\label{eq:cd}
    \PP_1^\top\Delta g \leq \overline{\omega} + \PP_2^\top l- \PP_1^\top g,
  \end{align}
  where $[\PP_1^\top \quad \PP_2^\top] = |\PP^\top_{\refs}|$,
  with $\PP_1 \in \real^{ n_g \times (n-1)}$ and $\PP_2 \in \real^{n_l
    \times (n-1) }$, and $|\PP^\top_{\refs}|$ denotes the non-negative
  matrix whose elements are given by the absolute values of the
  corresponding elements of $\PP^\top_{\refs}$.
\end{lemma}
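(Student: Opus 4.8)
The plan is to exploit the fact that, for a tree, the line flows are uniquely determined by the nodal power injections and to express this dependence through the path matrix $\PP_{\refs}$. Since $\G_m$ is a tree, $m=n-1$ and its cycle space is trivial (the fundamental loop matrix is empty), so on balanced injections the map $\omega \mapsto \M\omega$ is a bijection. Concretely, deleting from $\M$ the row of the reference vertex $\refs$ (the tie-line bus) yields a square matrix $\M_0 \in \real^{(n-1)\times(n-1)}$ that is invertible for a tree, and the equality constraint~\eqref{eq:constraints_ramp:a} then gives
\begin{align*}
  \omega + \Delta\omega = \M_0^{-1} \begin{bmatrix} g + \Delta g \\ -l \end{bmatrix},
\end{align*}
after reordering the non-reference injections into the controllable block $g+\Delta g$ and the load block $-l$. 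This already eliminates $\Delta\omega$ and the equality constraint.

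First I would establish the structural identity $\M_0^{-1} = \PP_{\refs}^\top$, that the inverse of the reduced incidence matrix equals the transpose of the path matrix. This follows from the tree geometry: removing edge $j$ disconnects $\G_m$ into two components, and flow balance forces $\omega_j$ (signed by the orientation of edge $j$) to equal the net injection of the component $T_j$ not containing $\refs$; the incidence of edge $j$ in the directed paths to $\refs$ is exactly membership of a vertex in $T_j$, which is precisely what $\PP_{\refs}$ records. This is the discrete analogue of the cut interpretation and is consistent with the null-space characterization in~\cite[Theorem 4-6]{SS-MBR:61}. With this identity, $\omega+\Delta\omega = \PP_{\refs}^\top[g+\Delta g;\,-l]$ and likewise $\omega = \PP_{\refs}^\top[g;\,-l]$.

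The next step exploits a sign-consistency property to pass to the absolute-value matrices $[\PP_1^\top\ \PP_2^\top] = |\PP_{\refs}^\top|$. For a fixed edge $j$, every vertex of the subtree $T_j$ traverses edge $j$ in the same direction along its path to $\refs$, so the $j$-th row of $\PP_{\refs}^\top$ is a single sign $s_j \in \{+1,-1\}$ times the indicator of $T_j$. Consequently
\begin{align*}
  (\omega + \Delta\omega)_j = s_j \big( \PP_1^\top(g + \Delta g) - \PP_2^\top l \big)_j ,
\end{align*}
so $|\omega + \Delta\omega|_j = |(\PP_1^\top(g+\Delta g) - \PP_2^\top l)_j|$, and the flow bound $|\omega+\Delta\omega| \le \overline{\omega}$ becomes $|\PP_1^\top(g+\Delta g) - \PP_2^\top l| \le \overline{\omega}$, where the quantity inside is exactly the vector of subtree net injections.

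Finally I would argue that in the ramp-up setting of~\eqref{eq:rc} only the upper half of this two-sided constraint is active. Raising the controllable powers increases the net injection of every subtree, so the lower bound $-\overline{\omega} \le \PP_1^\top(g+\Delta g) - \PP_2^\top l$ holds at the feasible starting point $g$ and is moved away from as $\Delta g$ grows; hence over the region relevant to maximizing $\one^\top\Delta g$ it is redundant, leaving $\PP_1^\top(g+\Delta g) - \PP_2^\top l \le \overline{\omega}$, i.e. the stated inequality $\PP_1^\top\Delta g \le \overline{\omega} + \PP_2^\top l - \PP_1^\top g$. The main obstacle I anticipate is the rigorous justification of the identity $\M_0^{-1} = \PP_{\refs}^\top$ together with the per-edge sign consistency, where getting the orientation conventions exactly right is delicate; a secondary point is the careful argument that the lower flow bound is genuinely inactive in the ramp direction.
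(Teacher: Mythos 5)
Your proposal is correct and follows essentially the same route as the paper's proof: delete the reference row of the incidence matrix, use the identity $\M_{\refs}^{-1}=\PP_{\refs}^\top$ (which the paper simply cites from the literature rather than rederiving via the subtree-cut argument), exploit the fact that each row of $\PP_{\refs}^\top$ has all nonzero entries of a single sign, and then retain only the side of the two-sided flow bound that can become active as $g$ increases in the ramp-up direction. The only differences are that you supply a self-contained justification of the path-matrix identity and a slightly more explicit argument for discarding the lower flow bound, both of which the paper treats more tersely.
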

\begin{proof}
  Let $\M_{\refs} \in \real^{(n-1) \times (n-1)}$ denote the matrix
  obtained after removing the row corresponding to vertex $\refs$ from
  $\M$.  According to~\cite{JR:63}, we have
  \begin{equation}\label{eq:pm}
    \M_{\refs}^{-1}=\PP_{\refs}^\top.
  \end{equation}
  With first vertex as $\refs$, equation~\eqref{eq:pffa} could be
  rewritten as
  \begin{equation}\label{eq:pfr}
    \begin{bmatrix}
      g+ \Delta g
      \\
      -l
    \end{bmatrix}
    = \M_{\refs} (\omega+ \Delta \omega),
  \end{equation}
  where we have used the fact that $\rank(\M)=\rank(\M_{\refs}) =n-1$,
  cf.~\cite[Corollary 4-4]{SS-MBR:61}.  Using~\eqref{eq:pfr}
  and~\eqref{eq:pm}, constraint~\eqref{eq:constraints_ramp} is
  equivalent to
  \begin{align*}    
-\overline{\omega} \leq \PP_{\refs}^\top \begin{bmatrix}g+\Delta
      g \\ -l\end{bmatrix} \leq \overline{\omega}.
  \end{align*}
  Due to the structure of $\PP_{\refs}$,
  cf. Section~\ref{sec:prelims}, all the non-zero entries for any row
  of $\PP^\top_{\refs}$ are either 1 or -1.  Since we are
  characterizing the ramp up rate and are only concerned with what
  happens to the feasible region with the increase in some
  component(s) of $g$, the active constraint for the lines for which
  the non-zero entries are 1 would be
  \begin{subequations}\label{eq:cdpm}
    \begin{align}
      \PP_{\refs}^\top \begin{bmatrix} g+ \Delta g \\ -l\end{bmatrix}
      \leq \overline{\omega},
    \end{align}
    and for the lines for which the non-zero entries are -1 would be
    \begin{align}
      -\PP_{\refs}^\top \begin{bmatrix} g+ \Delta g \\ -l\end{bmatrix}
      \leq \overline{\omega}.
    \end{align}
  \end{subequations}
  \eqref{eq:cdpm} is equivalent to~\eqref{eq:cd}, completing the proof.
\end{proof}

\subsection*{Proof of Proposition \ref{prop1}}
Let us start by denoting the region where 
\begin{align*}
\PP_1^\top r \leq \overline{\omega} +
\PP_2^\top l -\PP_1^\top g,
\end{align*}
by $V_1$.
Boundaries of $V_1$ are $(n-1)$ hyperplanes given by
\begin{align*}
 \PP_1^\top r = \overline{\omega} +
\PP_2^\top l-\PP_1^\top g.
\end{align*}
Some of these hyperplanes could
even be outside $H$. But in general, all these $(n-1)$
hyperplanes could be the faces of $V_1$. It is clear that in $V_1$,
none of the flow constraints is active and $\R(g)=\mathbf{1}^\top r$.

Outside $V_1$, we have 
\begin{align}\label{eq:vnot}
\overline{\omega}_j + \PP_{2j}^\top l-\PP_{1j}^\top g < \PP_{1j}^\top r
\end{align}
for at least one $j \in \{1, \ldots, n-1\}$. First we consider the
region where~\eqref{eq:vnot} holds for only one such $j$, denoted as
$j'$. Then either
\begin{align*}
 \overline{\omega}_{j'} + \PP_{2j'}^\top l-\PP_{1j'}^\top g > 0,
\text{ or } \overline{\omega}_j' + \PP_{2j'}^\top l-\PP_{1j'}^\top g =0.
\end{align*}
 In the
former case, we are in the polyhedron whose two faces are given by
\begin{align*}
\overline{\omega}_{j'} + \PP_{2j'}^\top l-\PP_{1j'}^\top g = \PP_{1j'}^\top r,  \text{ and }
 \overline{\omega}_{j'} + \PP_{2j'}^\top l-\PP_{1j'}^\top g =0. 
\end{align*}
Let us call one of these polyhedron $V_2$. In $V_2$,
$\R(g)=\mathbf{1}^\top \Delta g$, where $\Delta g$ satisfies 
\begin{align*}
\overline{\omega}_{j'} + \PP_{2j'}^\top l-\PP_{1j'}^\top g = \PP_{1j'}^\top \Delta g.
\end{align*}
For $\mathbf{1}^\top \Delta g$ to be maximum, the controllable nodes for
which the corresponding entries are zero in $\PP_{1j'}$, we will have
$\Delta g_p=r_p$.
As some component(s) of $g$ for which the corresponding entry in $\PP_{1j'}=1$ increases,
some components of $\Delta g$ with corresponding entry 1, decrease to balance it. Hence,
$\R(g)=\mathbf{1}^{\top}r-\PP_{1j'}^{\top} g$. 
Now considering the latter
case when 
\begin{align*}
\overline{\omega}_j' + \PP_{2j'}^{\top} l- \PP_{1j'}^{\top} g =0.
\end{align*}
On this hyperplane, $\R(g)$ becomes constant again as the controllable
nodes for which the corresponding entries are zero in $\PP_{1j'}$ have
$\Delta g_p=r_p$ and other entries of $\Delta g$ have to be
zero. Hence, $\R(g)=(\mathbf{1}-\PP_{1j'})^{\top}r$. Note that
different polyhedrons similar to $V_2$ might exist with different
$j'$.

Now we consider the regions where~\eqref{eq:vnot} holds for multiple
$j \in \{1,\ldots,n-1\}$.  Let us denote by $V_3$ the polyhedron,
whose few faces are given by
\begin{align*}
 \overline{\omega}_j + \PP_{2j}^{\top} l-\PP_{1j}^{\top} g =\PP_{1j}^{\top} r,
\end{align*}
 for all  $j$ satisfying~\eqref{eq:vnot}. Inside $V_3$,
$\R(g)=\mathbf{1}^{\top} \Delta g$, where $\Delta g$ is given by the
simultaneous solution of 
\begin{align*}
\PP_{1j}^{\top} \Delta g \leq \overline{\omega}_j
+ \PP_{2j}^{\top} - \PP_{1j}^{\top} g,
\end{align*}
for all the corresponding $j$ and $\mathbf{1}^{\top} \Delta g$ is
maximum. At least, one of these inequalities would hold with
equality. Similar to $V_2$, we notice that if we increase some
component(s) of $g$ in $V_3$ with corresponding entry in any of
$P_{ij}$ as 1, $\R(g)$ decreases linearly. While increasing some
component of $g$, a point would be reached where
\begin{align}\label{eq:rc_zero}
\overline{\omega}_j +
\PP_{2j}^{\top} l-\PP_{1j}^{\top} g =0,
\end{align}
for some $j$ and that would be another face of $V_3$. On this
hyperplane, $\Delta g_p=0$ for the controllable nodes for which the
corresponding entry of $\PP_{1j}=1$ in~\eqref{eq:rc_zero}.  Note that
$\R(g)$ is still linear as $V_3$ but with a different slope.

In general, depending on the parameters of the microgrid at hand,
there would be several polyhedrons where~\eqref{eq:vnot} holds
for different $j$. But the characterization of ramping capacity
would be similar to $V_3$ in all these.
Since the ramp rate is either affine or constant in all the
polyhedra, it is affine. 
\hfill $\blacksquare$

\subsection*{Proof of Lemma \ref{lemma:cvx}}
If the difference between two regulation powers, i.e., $|x-x^-|$ is
greater than the ramp rate at $x^-$, then the microgrid might not be
able to provide the regulation power at all. On the other hand, if the
difference
is less than the ramp rate, then it is clear that the microgrid would
be able to provide the required regulation power optimally. So, in the
latter case, the cost of providing regulation power $x$ or the
solution of~\eqref{eq:costnewno} is equivalent to the optimization
in~\eqref{eq:cost}.

Next, we provide a proof for the convexity of $f$ if $h$ is convex. 
Let $C(x)=C_0 \cap C_1(x)$, where $C_0$ denotes the capacity
constraints for $g$ and
\begin{align*}
  C_1(x)= \Big\{g \; | \; \begin{bmatrix} P^0+x
    \\
    g
    \\
    -l
  \end{bmatrix}
  = \M\omega \text{ and } |\omega| \leq \overline{\omega} \Big\}.
\end{align*}
Then, we have $ f(x)= \min\limits_{g \in C(x)} h(g)$.  Let $x_1,x_2
\in [\overline{x},\underline{x}]$, where $\overline{x}$ and
$\underline{x}$ are respectively, the maximum up and down regulation
identified in Section~\ref{sec:bounds}.  Then $f(x_1)=\min\limits_{g
  \in C(x_1)} h(g) $, which means that for all $\delta > 0$, there
exists $g_1 \in C(x_1)$ such that $f(x_1)+ \delta \geq h(g_1)$.
Similarly, there exists $g_2 \in C(x_2)$ such that $f(x_2)+ \delta
\geq h(g_2) $.  Since $g_1 \in C(x_1)$ and $ g_2 \in C(x_2)$,
therefore $ \lambda g_1 + (1-\lambda) g_2 \in C(\lambda x_1 +
(1-\lambda) x_2)$, where $\lambda \in [0,1]$.  Hence,
\begin{align*}
  f(\lambda x_1 + (1-\lambda) x_2) &= \min\limits_{g \in C(\lambda x_1
    + (1-\lambda) x_2)} h(g),
  \\
  & \leq h( \lambda g_1 + (1-\lambda) g_2),\\
  & \leq \lambda h(g_1) + (1-\lambda)h(g_2),\\
  & \leq \lambda f(x_1)+(1-\lambda)f(x_2) + \delta,
\end{align*}
where the second last inequality would be strict in case of strict
convexity.  Since $\delta$ is arbitrary,  $f$ is (strictly)
convex. \hfill $\blacksquare$

\subsection*{Proof of Lemma~\ref{lemma:equivalence}}
We begin by noting that $x_{\im}=x_{\im}^-$ for each $\im$ satisfies
both set of constraints in~\eqref{eq:isonew}, since $x^-$ is the set
of regulations provided by the aggregators at the previous instant.
Hence,~\eqref{eq:isonew} is always feasible.  To prove the equivalence
between the two problems, as our first step, we rewrite~\eqref{eq:iso}
as
\begin{equation}\label{eq:iso1}
  \begin{aligned}
    & {\min_{x}}
    & &   f(x) \\
    & \text{s.t.}  & & x_r \leq \one^{\top} x,
    \\
    &&& \underline{x_{\im}} \leq x_{\im} \leq
    \overline{x_{\im}} \quad \forall \im,
    \\
    &&& |x_{\im} - x_{\im}^-| \leq R_{\im}(x_{\im}^-) \quad \forall \im.
  \end{aligned}
\end{equation}
Note that the equality constraint in~\eqref{eq:iso} is replaced by the
inequality constraint in \eqref{eq:iso1}.  If feasible, both problems
have the same set of solutions. Problem~\eqref{eq:iso1} can still be
infeasible.  Let $\F$ denote its feasible set.  Since $\F$ is compact,
the solution set of~\eqref{eq:iso1} is also compact.  Also, since the
constraints in~\eqref{eq:iso1} are affine, the refined Slater
condition is satisfied.  According to~\cite[Proposition 1]{DPB:75b},
if~\eqref{eq:iso1} is convex, has a non-empty and compact solution set
and satisfies the refined Slater condition, then~\eqref{eq:iso1}
and~\eqref{eq:isonew} have exactly the same solution set if
\begin{align*}
  \mu>\|\lambda \|_\infty,
\end{align*}
for some Lagrange multiplier $\lambda$ of~\eqref{eq:iso1}, as
claimed. \hfill $\blacksquare$

\subsection*{Proof of Theorem \ref{thm:gdac}}
For simplicity of exposition, we ignore the box constraints and write~\eqref{eq:gd+dac} as
\begin{subequations}\label{eq:gd+dac_local}
  \begin{align}
   \!\! \dot{x} & = -\nabla f (x) + [\mu]_{z}^+ \label{eq:x_local},
    \\
   \!\! \dot{z}& = -\nu z -\beta \Lap z - v + \nu (x_r e - x) + \nabla f (x) -
    [\mu]^+_z,\label{eq:z_local}
    \\
    \!\!\dot{v}&=\nu \beta \Lap z, \label{eq:v_local}
  \end{align}
\end{subequations}
First, consider the function $V_2:\real^{2N} \to \real_{\ge 0}$,
$V_2(x,z)=\one^{\top} z - \Delta x$. The Lie
derivative $\lie_{\gdac} V_2 : \real^{2N} \rightrightarrows \real$ is then given by
\begin{align*} 
 \lie_{\gdac} V_2 = \one^{\top} \dot{z} + \one^{\top} \dot{x} = -\nu
  \one^{\top}(z-(x_r e - x))=-\nu V_2,
\end{align*}
where we have used the fact that $\one^{\top} v=0$ due to the initial
condition $\one^{\top} v(0)=0$ and dynamics $\eqref{eq:v_local}$.  The
above equation implies that the summation of all the entries of $z$
converges to the mismatch between the required regulation and procured
regulation exponentially with rate $\nu$. Hence $\one^\top z - \Delta
x \equiv 0$ with the stated initialization.

Next consider the change of coordinates $(x, z,v) \mapsto (x, z ,
\eta)$, with $\eta = \nu (z -(x_r e - x)) + v$. The dynamics for $z$
and $\eta$ are then given by
  \begin{align*}
    \dot{z} &= -\beta \Lap z -\eta + \nabla f(x) - [\mu]^+_z,\\
    \dot{\eta}&=- \nu \eta.
  \end{align*}
Consider the Lyapunov function candidate $V : \real^{3N} \rightarrow \real_{\ge 0}$,
\begin{align*}
  V(x,z,\eta)= f^\mu(x) + \mu \sum\limits_{\im=1}^N [z_{\im}]^+ + \frac{1}{2} \|\eta\|^2,
\end{align*}
whose generalized gradient $\partial V : \real^{3N} \rightrightarrows \real^{3N}$ is given by
\begin{align*}
  \partial V(x,z,\eta)
 \!\! =\!\!
  \begin{cases}
   \! \{ \nabla f(x)\! -\! [ \mu \one]_{\Delta x}^+, [ \mu]_z^+ , \eta \},
    \quad \; \Delta x \neq 0, z \neq \zero,
    \\
   \! \{\nabla f(x)\!-\![ \zero, \mu \one], [ \mu]_z^+, \eta \}, \quad 
   \; \Delta x=0, z \neq \zero,
    \\
    \! \{ \nabla f(x)\! -\! [ \mu \one]_{\Delta x}^+, [\zero ,\mu \one ],
    \eta \},  \Delta x \neq 0, z = \zero,
    \\
   \! \{\nabla f(x)\!-\![ \zero, \mu \one], [\zero, \mu \one], \eta \}, 
   \; \Delta x=0, z = \zero.
  \end{cases}
\end{align*}
Following~\cite{JC:08-csm}, set-valued Lie derivative $\lie_{\gdac}V :
\real^{3N} \rightrightarrows \real$ can then be computed as
\begin{align*}
  \lie_{\gdac}V(x,z,\eta)=
  \begin{cases}
    (\nabla f -[\mu \one]_{\Delta x}^+)^\top (-\nabla f + [ \mu ]_z^+
    )
    \\
    + ([ \mu ]_z^+)^{\top} ( -\beta \Lap z - \eta + \nabla f - [ \mu ]_z^+
    )
    \\
    - \nu \|\eta\|^2, \qquad \qquad \qquad \Delta x \neq 0, z \neq
    \zero, \vspace*{0.2cm}
    \\
    \phi, \qquad \qquad \qquad \qquad \quad \text{otherwise}.
  \end{cases}
\end{align*}
We now analyze various cases of $\Delta x \neq 0, z \neq \zero$ in the
following
\begin{description}
\item[Case 1:] $\Delta x < 0$ and $z < \zero$.
  \begin{align*}
    \lie_{\gdac} V= -\|\nabla f\|^2- \nu \|\eta\|^2.
  \end{align*}
  
\item[Case 2:] $\Delta x > 0$ and $z > \zero$.
  \begin{align*}
  \hspace*{-15pt}  \lie_{\gdac} V= -\|\nabla f\|^2+ 3\mu \nabla f^{\top} \one -
    2N\mu^2 - \mu \eta^{\top} \one - \nu \|\eta\|^2.
  \end{align*}

\item[Case 3:] $\Delta x > 0$ and $z \ngtr \zero$.
  \begin{align*}
    \lie_{\gdac} V =& -\|\nabla f\|^2 - 2N_p\mu^2 + \nabla f^\top (\mu
    \one + 2[\mu]_z^+)\\& -\beta([ \mu]_z^+ )^{\top} \Lap z - \eta^{\top}
    [\mu]_z^+ - \nu \|\eta\|^2,
  \end{align*}
  where $N_p$ is the number of positive elements of $z$.
  
\item[Case 4:] $\Delta x < 0$ and $z \nless \zero$.
  \begin{align*}
    \lie_{\gdac} V =& -\|\nabla f\|^2+ 2 \nabla f^{\top} [ \mu]_z^+ -
    \beta ( [ \mu]_z^+ )^{\top} \Lap z \\ & -\eta^{\top} [ \mu]_z^+ -
    N_p\mu^2 - \nu \|\eta\|^2.
  \end{align*}
\end{description}
We do not need to consider the case when $\Delta x > 0$ and $z <
\zero$ since $\one^\top z -\Delta x \equiv 0$ due to the discussion
above.  Out of the 4 cases, $\lie_{\gdac}V<0$ for Case 1. For the
remaining cases, since $f$ is globally proper and $\| \nabla f\|$ is
bounded over any compact set, $\lie_{\gdac}V < 0$ if the value of
$\mu$ is taken large enough for the worst-case scenario
($N_p=1$). Since $\max \phi=-\infty$, $\max \lie_{\gdac}V<0$ except at
the equilibrium.
This along with the fact that $V$ is locally Lipschitz and regular implies that 
$V$ satisfies the hypothesis of~\cite[Theorem 1]{JC:08-csm}. Hence,
 the dynamics $\gdac$ converge to
the optimal solution asymptotically. \hfill $\blacksquare$

\bibliographystyle{IEEEtran}
\bibliography{alias,Main,Main-add,JC}

% Generated by IEEEtran.bst, version: 1.14 (2015/08/26)
\begin{thebibliography}{10}
\providecommand{\url}[1]{#1}
\csname url@samestyle\endcsname
\providecommand{\newblock}{\relax}
\providecommand{\bibinfo}[2]{#2}
\providecommand{\BIBentrySTDinterwordspacing}{\spaceskip=0pt\relax}
\providecommand{\BIBentryALTinterwordstretchfactor}{4}
\providecommand{\BIBentryALTinterwordspacing}{\spaceskip=\fontdimen2\font plus
\BIBentryALTinterwordstretchfactor\fontdimen3\font minus
  \fontdimen4\font\relax}
\providecommand{\BIBforeignlanguage}[2]{{%
\expandafter\ifx\csname l@#1\endcsname\relax
\typeout{** WARNING: IEEEtran.bst: No hyphenation pattern has been}%
\typeout{** loaded for the language `#1'. Using the pattern for}%
\typeout{** the default language instead.}%
\else
\language=\csname l@#1\endcsname
\fi
#2}}
\providecommand{\BIBdecl}{\relax}
\BIBdecl

\bibitem{PS-CYC-JC:18-acc}
P.~Srivastava, C.-Y. Chang, and J.~Cort\'{e}s, ``Participation of microgrids in
  frequency regulation markets,'' in \emph{{A}merican {C}ontrol {C}onference},
  Milwaukee, WI, May 2018, pp. 3834--3839.

\bibitem{CAISO:15-derp}
CAISO, ``Expanded metering and telemetry options phase 2 - distributed energy
  resource provider,'' 2015, draft proposal electronically available at
  \url{https://www.caiso.com/Documents/DraftFinalProposal_ExpandedMetering_TelemetryOptionsPhase2_DistributedEnergyResourceProvider.pdf}.

\bibitem{FERC-2222:20}
``Order {N}o. 2222: {P}articipation of distributed energy resource aggregations
  in markets operated by regional transmission organizations and independent
  system operators,'' Sep. 2020, available at
  \url{https://www.ferc.gov/sites/default/files/2020-09/E-1_0.pdf}.

\bibitem{FERC:11}
``Order {N}o. 755: Frequency regulation compensation in the organized wholesale
  power markets,'' 2011, available at
  \url{http://www.ferc.gov/whats-new/comm-meet/2011/102011/E-28.pdf}.

\bibitem{MKM:14}
M.~Kintner-Meyer, ``Regulatory policy and markets for energy storage in {N}orth
  {A}merica,'' \emph{Proceedings of the IEEE}, vol. 102, no.~7, pp. 1065--1072,
  2014.

\bibitem{JLM-SK-DSC:13}
J.~L. Mathieu, S.~Koch, and D.~S. Callaway, ``State estimation and control of
  electric loads to manage real-time energy imbalance,'' \emph{IEEE
  Transactions on Power Systems}, vol.~28, no.~1, pp. 430--440, 2013.

\bibitem{PC-MP-YP:14}
P.~Codani, M.~Petit, and Y.~Perez, ``Missing money for {EV}s: Economics impacts
  of {TSO} market designs,'' available at
  \url{https://ssrn.com/abstract=2525290}.

\bibitem{BMS-HH-KP-TLV:14}
B.~M. Sanandaji, H.~Hao, K.~Poolla, and T.~L. Vincent, ``Improved battery
  models of an aggregation of thermostatically controlled loads for frequency
  regulation,'' in \emph{{A}merican {C}ontrol {C}onference}, Portland, OR,
  2014, pp. 38--45.

\bibitem{JTH-ADDG-KP:16}
J.~T. Hughes, A.~D. Dom\'{i}nguez-Garc\'{i}a, and K.~Poolla, ``Identification
  of virtual battery models for flexible loads,'' \emph{IEEE Transactions on
  Power Systems}, vol.~31, no.~6, pp. 4660--4669, Nov 2016.

\bibitem{SR-JS-HR:13}
S.~Rahnama, J.~Stoustrup, and H.~Rasmussen, ``Integration of heterogeneous
  industrial consumers to provide regulating power to the smart grid,'' in
  \emph{{IEEE} Conf.\ on Decision and Control}, Florence, Italy, 2013, pp.
  6268--6273.

\bibitem{OB-MP-YP:16}
O.~Borne, M.~Petit, and Y.~Perez, ``Provision of frequency-regulation reserves
  by distributed energy resources: Best practices and barriers to entry,'' in
  \emph{International Conference on the European Energy Market (EEM)}, June
  2016, pp. 1--7.

\bibitem{OB-KK-YP-MP-AP:18}
O.~Borne, K.~Korte, Y.~Perez, M.~Petit, and A.~Purkus, ``Barriers to entry in
  frequency-regulation services markets: {R}eview of the status quo and options
  for improvements,'' \emph{Renewable and Sustainable Energy Reviews}, vol.~81,
  pp. 605 -- 614, 2018.

\bibitem{EDA-SG-AS-YCC-SVD:18}
E.~Dall'Anese, S.~Guggilam, A.~Simonetto, Y.~C. Chen, and S.~V. Dhople,
  ``Optimal regulation of virtual power plants,'' \emph{IEEE Transactions on
  Power Systems}, vol.~33, no.~2, pp. 1868--1881, 2018.

\bibitem{BB-PA-TSP-KMN-JS-LHH:13}
B.~Biegel, P.~Andersen, T.~S. Pedersen, K.~M. Nielsen, J.~Stoustrup, and L.~H.
  Hansen, ``Smart grid dispatch strategy for on/off demand-side devices,'' in
  \emph{{E}uropean {C}ontrol {C}onference}, Z\"urich, Switzerland, 2013, pp.
  2541--2548.

\bibitem{JTH-ADDG-KP:17}
J.~T. Hughes, A.~D. Dom\'{i}nguez-Garc\'{i}a, and K.~Poolla, ``Coordinating
  heterogeneous distributed energy resources for provision of frequency
  regulation services,'' in \emph{Hawaii International Conference on System
  Sciences}, Big Island, HI, January 2017, pp. 2983--2992.

\bibitem{CYC-SM-JC:17-cdc}
C.-Y. Chang, S.~Martinez, and J.~Cort\'es, ``Grid-connected microgrid
  participation in frequency-regulation markets via hierarchical
  coordination,'' in \emph{{IEEE} Conf.\ on Decision and Control}, Melbourne,
  Australia, Dec. 2017, pp. 3501--3506.

\bibitem{HX-SCU-ADDG-PWS:17}
H.~Xu, S.~C. Utomi, A.~D. Dom\'{i}nguez-Garc\'{i}a, and P.~W. Sauer,
  ``Coordination of distributed energy resources in lossy networks for
  providing frequency regulation,'' in \emph{IREP Bulk Power System Dynamics
  and Control Symposium}, Espinho, Portugal, August 2017.

\bibitem{RG-MA-PB:18}
R.~Ghaemi, M.~Abbaszadeh, and P.~Bonanni, ``Scalable optimal flexibility
  control of distributed loads in the power grid,'' in \emph{{A}merican
  {C}ontrol {C}onference}, Milwaukee, WI, June 2018, pp. 6646--6651.

\bibitem{PM-AMK-HB-GD:16}
P.~MacDougall, A.~M. Kosek, H.~Bindner, and G.~Deconinck, ``Applying machine
  learning techniques for forecasting flexibility of virtual power plants,'' in
  \emph{IEEE Electrical Power and Energy Conference (EPEC)}, Ottawa, ON,
  Canada, Oct 2016, pp. 1--6.

\bibitem{YW-XA-ZT-LY-SL:16}
Y.~Wang, X.~Ai, Z.~Tan, L.~Yan, and S.~Liu, ``Interactive dispatch modes and
  bidding strategy of multiple virtual power plants based on demand response
  and game theory,'' \emph{IEEE Transactions on Smart Grid}, vol.~7, no.~1, pp.
  510--519, Jan 2016.

\bibitem{SZ-YM-MS:17}
S.~Zhang, Y.~Mishra, and M.~Shahidehpour, ``Utilizing distributed energy
  resources to support frequency regulation services,'' \emph{Applied Energy},
  vol. 206, pp. 1484--1494, 2017.

\bibitem{SC-AM-GK:18}
S.~{Camal}, A.~{Michiorri}, and G.~{Kariniotakis}, ``Optimal offer of automatic
  frequency restoration reserve from a combined {PV}/wind virtual power
  plant,'' \emph{IEEE Transactions on Power Systems}, vol.~33, no.~6, pp.
  6155--6170, 2018.

\bibitem{JH-JC-JMG-TY-JY:17}
J.~{Hu}, J.~{Cao}, J.~M. {Guerrero}, T.~{Yong}, and J.~{Yu}, ``Improving
  frequency stability based on distributed control of multiple load
  aggregators,'' \emph{IEEE Transactions on Smart Grid}, vol.~8, no.~4, pp.
  1553--1567, 2017.

\bibitem{TA-MM-PS-HVH-JC-JK-SM-BW:20}
T.~Anderson, M.~Muralidharan, P.~Srivastava, H.~V. Haghi, J.~Cort\'es,
  J.~Kleissl, S.~Mart{\'\i}nez, and B.~Washom, ``Frequency regulation with
  heterogeneous energy resources: A realization using distributed control,''
  \emph{IEEE Transactions on Smart Grid}, 2020, submitted.

\bibitem{SSK-JC-SM:15-ijrnc}
S.~S. Kia, J.~Cort{\'e}s, and S.~Martinez, ``Dynamic average consensus under
  limited control authority and privacy requirements,'' \emph{International
  Journal on Robust and Nonlinear Control}, vol.~25, no.~13, pp. 1941--1966,
  2015.

\bibitem{JB-SMH-BFH:12}
\BIBentryALTinterwordspacing
J.~Bushnell, S.~M. Harvey, and B.~F. Hobbs, ``Opinion on pay-for-performance
  regulation,'' Market Surveillance Committee, California ISO, Tech. Rep.,
  March 9 2012. [Online]. Available:
  \url{http://www.caiso.com/Documents/MSC-FinalOpinion-Pay-for-PerformanceRegulation.pdf}
\BIBentrySTDinterwordspacing

\bibitem{DF-MZ-ADDG:18}
D.~{Fooladivanda}, M.~{Zholbaryssov}, and A.~D. {Dom\'{i}nguez-Garc\'{i}a},
  ``Control of networked distributed energy resources in grid-connected {AC}
  microgrids,'' \emph{IEEE Transactions on Control of Network Systems}, vol.~5,
  no.~4, pp. 1875--1886, 2018.

\bibitem{FD-JWSP-FB:16-tcns}
F.~D\"{o}rfler, J.~W. Simpson-Porco, and F.~Bullo, ``Breaking the hierarchy:
  Distributed control \& economic optimality in microgrids,'' \emph{IEEE
  Transactions on Control of Network Systems}, vol.~3, no.~3, pp. 241--253,
  2016.

\bibitem{LL-SD:14}
L.~{Luo} and S.~V. {Dhople}, ``Spatiotemporal model reduction of inverter-based
  islanded microgrids,'' \emph{IEEE Transactions on Energy Conversion},
  vol.~29, no.~4, pp. 823--832, 2014.

\bibitem{OA-MA-IC-PWS-ADDG:17}
O.~Ajala, M.~Almeida, I.~Celanovic, P.~W. Sauer, and A.~D.
  Dom\'{i}nguez-Garc\'{i}a, ``A hierarchy of models for microgrids with
  grid-feeding inverters,'' in \emph{IREP Bulk Power System Dynamics and
  Control Symposium}, Espinho, Portugal, August 2017.

\bibitem{QCZ-TH:12}
Q.-C. Zhong and T.~Hornik, \emph{Control of power inverters in renewable energy
  and smart grid integration}.\hskip 1em plus 0.5em minus 0.4em\relax John
  Wiley \& Sons, 2012, vol.~97.

\bibitem{MZ-ADDG:20}
M.~Zholbaryssov and A.~D. Dom\'{i}nguez-Garc\'{i}a, ``Convex relaxations of the
  network flow problem under cycle constraints,'' \emph{IEEE Transactions on
  Control of Network Systems}, vol.~7, no.~1, pp. 64--73, 2020.

\bibitem{SS-MBR:61}
S.~Seshu and M.~B. Reed, \emph{Linear Graphs and Electrical Networks}.\hskip
  1em plus 0.5em minus 0.4em\relax Addison-Wesley Publishing Company, 1961.

\bibitem{AC-JC:15-tcns}
A.~Cherukuri and J.~Cort{\'e}s, ``Distributed generator coordination for
  initialization and anytime optimization in economic dispatch,'' \emph{IEEE
  Transactions on Control of Network Systems}, vol.~2, no.~3, pp. 226--237,
  2015.

\bibitem{JC:08-csm}
J.~Cort{\'e}s, ``Discontinuous dynamical systems -- a tutorial on solutions,
  nonsmooth analysis, and stability,'' \emph{{IEEE} Control Systems}, vol.~28,
  no.~3, pp. 36--73, 2008.

\bibitem{ZKP-VRD-MJR-JK:18}
Z.~K. Pecenak, V.~R. Disfani, M.~J. Reno, and J.~Kleissl, ``Multiphase
  distribution feeder reduction,'' \emph{IEEE Transactions on Power Systems},
  vol.~33, no.~2, pp. 1320--1328, March 2018.

\bibitem{BW-JD-DW-JK-NB-WT-CR:13}
B.~{Washom}, J.~{Dilliot}, D.~{Weil}, J.~{Kleissl}, N.~{Balac}, W.~{Torre}, and
  C.~{Richter}, ``Ivory tower of power: {M}icrogrid implementation at the
  {U}niversity of {C}alifornia, {S}an {D}iego,'' \emph{IEEE Power and Energy
  Magazine}, vol.~11, no.~4, pp. 28--32, 2013.

\bibitem{AC-BG-JC:17-sicon}
A.~Cherukuri, B.~Gharesifard, and J.~Cort{\'e}s, ``Saddle-point dynamics:
  conditions for asymptotic stability of saddle points,'' \emph{SIAM Journal on
  Control and Optimization}, vol.~55, no.~1, pp. 486--511, 2017.

\bibitem{AC-JC:16-allerton}
A.~Cherukuri and J.~Cort\'{e}s, ``Distributed algorithms for convex network
  optimization under non-sparse equality constraints,'' in \emph{Allerton
  Conf.\ on Communications, Control and Computing}, Monticello, IL, Sep. 2016,
  pp. 452--459.

\bibitem{PJM}
PJM, ``Dynamic regulation test signal {(RegD)} signal,''
  \url{http://www.pjm.com/~/media/markets-ops/ancillary/regd-test-wave.ashx}.

\bibitem{JR:63}
J.~Resh, ``The inverse of a nonsingular submatrix of an incidence matrix,''
  \emph{IEEE Transactions on Circuit Theory}, vol.~10, no.~1, pp. 131--132,
  1963.

\bibitem{DPB:75b}
D.~P. Bertsekas, ``Necessary and sufficient conditions for a penalty method to
  be exact,'' \emph{Mathematical Programming}, vol.~9, no.~1, pp. 87--99, 1975.

\end{thebibliography}

\end{document}